\numberwithin{equation}{section}
\begin{document}

\newtheorem{teo}{Theorem}[section] \newtheorem*{teo*}{Theorem}
\newtheorem{prop}[teo]{Proposition} \newtheorem*{prop*}{Proposition}
\newtheorem{lema}[teo]{Lemma} \newtheorem*{lema*}{Lemma}
\newtheorem{cor}[teo]{Corollary} \newtheorem*{cor*}{Corollary}

\theoremstyle{definition}
\newtheorem{defi}[teo]{Definition} \newtheorem*{defi*}{Definition}
\newtheorem{exem}[teo]{Example} \newtheorem*{exem*}{Example}
\newtheorem{obs}[teo]{Remark} \newtheorem*{obs*}{Remark}
\newtheorem*{hipo}{Hypotheses}
\newtheorem*{nota}{Notation}

\newcommand{\ds}{\displaystyle} \newcommand{\nl}{\newline}
\newcommand{\eps}{\varepsilon}
\newcommand{\LMMR}{\mbox{LMMR}}
\newcommand{\bE}{\mathbb{E}}
\newcommand{\cB}{\mathcal{B}}
\newcommand{\cF}{\mathcal{F}}
\newcommand{\cA}{\mathcal{A}}
\newcommand{\cM}{\mathcal{M}}
\newcommand{\cD}{\mathcal{D}}
\newcommand{\cN}{\mathcal{N}}
\newcommand{\cL}{\mathcal{L}}
\newcommand{\cLN}{\mathcal{LN}}
\newcommand{\bP}{\mathbb{P}}
\newcommand{\bQ}{\mathbb{Q}}
\newcommand{\bN}{\mathbb{N}}
\newcommand{\bR}{\mathbb{R}}
\newcommand{\rhor}{\raisebox{1.5pt}{$\rho$}}
\newcommand{\varphir}{\raisebox{1.5pt}{$\varphi$}}
\newcommand{\taur}{\raisebox{1pt}{$\tau$}}
\newcommand{\VIX}{\mbox{VIX}}

\title{Multiscale Stochastic Volatility Model for Derivatives on Futures}

\author{Jean-Pierre Fouque\thanks{Department of Statistics \& Applied Probability, University of California, Santa Barbara, CA 93106-3110,
{\em fouque@pstat.ucsb.edu}. Work supported by NSF grant DMS-1107468.} ,
Yuri F. Saporito\thanks{Department of Statistics \& Applied Probability, University of California, Santa Barbara, CA 93106-3110,
{\em saporito@pstat.ucsb.edu}. Work supported by Fulbright grant 15101796 and
The CAPES Foundation, Ministry of Education of Brazil, Bras\'ilia, DF 70.040-020, Brazil.} ,
Jorge P. Zubelli\thanks{IMPA (Instituto de Matem\'atica Pura e Aplicada), Est. D. Castorina 110, Rio de Janeiro, RJ 22460-320, Brazil, {\em zubelli@impa.br}. Work supported by CNPq under grants 302161 and 474085, and by FAPERJ under the CEST and PENSARIO programs.}}

\maketitle

\abstract{In this paper we present a new method to compute the first-order approximation of the price of derivatives on futures in the context of multiscale stochastic volatility of Fouque \textit{et al.} (2011, CUP). It provides an alternate method to the singular perturbation technique presented in Hikspoors and Jaimungal (2008). The main features of our method are twofold: firstly, it does not rely on any additional hypothesis on the regularity of the payoff function, and secondly, it allows an effective and straightforward calibration procedure of the model to implied volatilities. These features were not achieved in previous works. Moreover, the central argument of our method could be applied to interest rate derivatives and compound derivatives. The only pre-requisite of our approach is the first-order approximation of the underlying derivative. Furthermore, the model proposed here is well-suited for commodities since it incorporates mean reversion of the spot price and multiscale stochastic volatility. Indeed, the model was validated by calibrating it to options on crude-oil futures, and it displays a very good fit of the implied volatility.}


\section{Introduction}

In many financial applications the underlying asset of the derivative contract under consideration is a derivative itself. A very important example of this complex and widely traded class of products consists of derivatives on future contracts. We shall study such financial instruments in the context of multiscale stochastic volatility as presented in \cite{fouque_stochastic_vol_new}.

It is well-known that under the No-Arbitrage Hypothesis, one can find a risk-neutral probability measure such that all \textit{tradable} assets in this market, when properly discounted, are martingales under this measure (see \cite{math_of_arbitrage} for an extensive exposition on this subject). Here, we assume constant interest rate throughout this paper.

The future contract on the asset $V$ with maturity $T$ is a standardized contract traded at a futures exchange for which both parties consent to trade the asset $V$ at time $T$ for a price agreed upon the day the contract was written. This previously arranged price is called \textit{strike}. The \textit{future price} at time $t$ with maturity $T \geq t$ of the asset $V$, which will be denoted by $F_{t,T}$, is defined as the strike of the future contract on $V$ with maturity $T$ such that no premium is paid at time $t$. In symbols,
\begin{align}\label{eq:fut_price_intro}
F_{t,T} = \bE_{\bQ}[V_T \ | \ \cF_t],
\end{align}
where $\bQ$ is a risk-neutral probability. If the asset $V$ is tradable, then we simply have $F_{t,T} = e^{r(T-t)}V_t$, where $r$ is the constant interest rate, and then derivatives on futures can be treated in the exact same way one handles derivatives on the asset itself.

When interest rate is constant, future prices are non-trivial when the asset is not tradable and therefore the discounted asset price is not a martingale, see for example \cite[Chapter 3]{musirutko08}. This will be our main assumption: \textit{the asset $V$ is not tradable}. More precisely, we assume the asset price presents mean reversion. Some examples of such assets are: commodities, currency exchange rates, volatility indices, and interest rates.

Empirical evidence of the presence of stochastic factors in the volatility of financial assets is greatly documented in the literature, see for example \cite{gatheral06} and references therein. The presence of a fast time-scale in the volatility in the S\&P 500 was reported in \cite{fouque_empirical}. We refer the reader to \cite{fouque_stochastic_vol_new} for comprehensive exposition on this subject. Multiscale stochastic volatility models lead to a first-order approximation of derivatives prices. This approximation is composed by the leading-order term given by the Black-Scholes price with the averaged effective volatility and the first-order correction only involves Greeks of this leading-term. In terms of implied volatility, this perturbation analysis translates into an affine approximation in the log-moneyness to maturity ratio (LMMR). Subsequently, this leads to a simple calibration procedure of the group market parameters, that are also used to compute the first-order  approximation of the price of exotic derivatives.

Because of the nature of our problem, the future price, which is the underlying asset of the derivative in consideration, has its dynamics explicitly depending on the time-scales of the volatility. This creates an important difference from the usual perturbation theory to the derivative pricing problem.

The method presented in this paper can be described as follows:
\begin{itemize}

\item[(i)] Write the stochastic differential equation (SDE) for the future $F_{t,T}$ with all coefficients depending only on $F_{t,T}$. This means we will need to invert the future prices of $V$ in order to write $V_t$ as a function of $F_{t,T}$.

\item[(ii)] Consider the pricing partial differential equation (PDE) for a European derivative on $F_{t,T}$. The coefficients of this PDE will depend on the time-scales of the stochastic volatility of the asset in a complicated way. At this point, we use perturbation analysis to treat such PDE by expanding the coefficients.

\item[(iii)] Determine the first-order approximation of derivatives on $F_{t,T}$ as it is done in \cite{fouque_stochastic_vol_new}.

\end{itemize}

Indeed, this method is not the only way to tackle this problem. Instead, we could have considered this compound derivative as a more elaborate derivative in the asset and then find the first-order approximation proposed in \cite{jaimungal_futures}. This, in turn, follows the idea designed in \cite{multiscale_option_interest_rate} and is based on the Taylor expansion of the payoff under consideration around the zero-order term of the approximation of the future price $F_{t,T}$. Therefore, some smoothness of the payoff function must be assumed. Since the method considered here does not rely on such Taylor expansion, no restriction other than the ones intrinsic to the perturbation method is required. Furthermore, we shall show that although the method presented here is more involved, it allows a cleaner calibration. This is due to the fact we are considering the derivative as a function of the future price, which is a tradable asset and hence a martingale under the pricing risk-neutral measure. We refer to Section \ref{sec:comparison} for a more thoroughly comparison between our method introduced here and the method presented in \cite{jaimungal_futures}.

Another important set of examples that can be handled using the method proposed in this paper consists of interest rate derivatives (see \cite{multiscale_option_interest_rate}). Moreover, in the equity case, the method could be use to tackle the general problem of pricing compound derivatives, as it is done in \cite{multiscale_compound} by Taylor expansion of the payoff function.


The main contribution of our work is a general method to compute the first-order approximation of the price of general compound derivatives such that no additional hypothesis on the regularity of the payoff function must be assumed. The only pre-requisite is the first-order approximation of the underlying derivative. In other words, the method proposed here allows us to derive the first-order approximation of compound derivatives keeping the hypotheses of the original approximation given in \cite{fouque_stochastic_vol_new}. Furthermore, this method maintains another desirable feature of the perturbation method: the direct calibration of the market group parameters.

This paper is organized as follows: Section \ref{sec:model} describes the dynamics of the underlying asset and then, in Section \ref{sec:option} we follow the method previously outlined to find the first-order approximation of derivatives on future contracts of $V$. Section \ref{sec:calibration} characterizes the calibration procedure to call options and we analyze an example of calibration to options on crude-oil futures. 
Finally, we conclude the paper in Sections \ref{sec:comparison} and \ref{sec:conclusion} with a comparison of our work with a previous method and some suggestions for further research.

\section{The Model}\label{sec:model}

Firstly, we fix a filtered risk-neutral probability space $(\Omega, \cF, (\cF_t)_{t \geq 0}, \bQ)$. The risk-neutral measure is chosen so that the relation \eqref{eq:fut_price_intro} holds. In this probability space, we assume that the asset value $V_t$ is described by an exponential Ornstein-Uhlenbeck (exp-OU) stochastic process with a multiscale stochastic volatility. Namely,
\begin{align}\label{eq:sde_risk_neutral}
\left\{
\begin{array}{l}
 V_t = e^{s(t) + U_t}, \\ \\
 dU_t = \kappa(m - U_t)dt + \eta(Y^{\eps}_t,Z^{\delta}_t) dW_t^{(0)}, \\ \\
 \ds dY^{\eps}_t = \frac{1}{\eps} \alpha(Y_t^{\eps})dt + \frac{1}{\sqrt{\eps}} \beta(Y_t^{\eps}) dW_t^{(1)}, \\ \\
 dZ^{\delta}_t = \delta c(Z^{\delta}_t)dt + \sqrt{\delta} g(Z^{\delta}_t) dW_t^{(2)}, \\
\end{array}
\right.
\end{align}
where $(W_t^{(0)},W_t^{(1)},W_t^{(2)})$ is a correlated $\bQ$-Brownian motion with
$$dW_t^{(0)}dW_t^{(i)} = \rhor_i dt, \ i=1,2, \ dW_t^{(1)}dW_t^{(2)} = \rhor_{12} dt.$$
We shall denote by $Y^{1}$ the process given by the second of the stochastic differential equations in \eqref{eq:sde_risk_neutral} when $\eps=1$.

The main assumptions of this model are:
\begin{enumerate}

\item[$\bullet$] There exists a unique solution of the SDE \eqref{eq:sde_risk_neutral} for any fixed $(\eps,\delta)$.

\item[$\bullet$] The risk-neutral probability $\bQ$ is chosen in order to match the future prices of $V$ observed in the market to the prices produced by the model \eqref{eq:sde_risk_neutral} and the martingale relation \eqref{eq:fut_price_intro}.

\item[$\bullet$] $|\rhor_1| < 1$, $|\rhor_2| < 1$, $|\rhor_{12}| < 1$ and $1 + 2\rhor_1 \rhor_2 \rhor_{12} - \rhor_1^2 - \rhor_2^2 - \rhor_{12}^2 > 0$. These conditions ensure the positive definiteness of the covariance matrix of $(W_t^{(0)},W_t^{(1)},W_t^{(2)})$.

\item[$\bullet$] The interest rate is constant and equals $r$.

\item[$\bullet$] $\alpha$ and $\beta$ are such that the process $Y^1$ has a unique invariant distribution and is mean-reverting as in \cite[Section 3.2]{fouque_stochastic_vol_new}.

\item[$\bullet$] $\eta(y,z)$ is a positive function, smooth in $z$ and such that $\eta^2(\cdot,z)$ is integrable with respect to the invariant distribution of $Y^1$.

\item[$\bullet$] $s(t)$ is a deterministic seasonality factor.

\end{enumerate}

It is important to notice that we could have explicitly considered the market prices of volatility risk as it is done in \cite{fouque_stochastic_vol_new} and in doing so we would have a term of order $\eps^{-1/2}$ and a term of order $\delta^{1/2}$ in the drifts of $Y^{\eps}$ and $Z^{\delta}$ respectively, both depending on $Y^{\eps}$ and $Z^{\delta}$, and they could have been handled in the way it is done in the aforesaid reference. For simplicity of notation, we do not consider these market prices of volatility risk here.

A simple generalization of this model is the addition of a deterministic time-varying long run mean $m(t)$ in the drift of $U$, which can be easily handled. A more subtle extension would be to the Schwartz two-factor model, see \cite{schwartz1997}.

We now restate the definition of the future prices of $V$
\begin{align}\label{eq:fut_price}
F_{t,T} = \bE_{\bQ}[V_T \ | \ \cF_t], \ 0 \leq t \leq T,
\end{align}
and then in next the section we will develop the first-order approximation of derivatives on $F_{t,T}$.
\begin{obs}
More precisely, we say that a function $g^{\eps,\delta}$ is a first-order approximation to the function $f^{\eps,\delta}$ if
$$|g^{\eps,\delta} - f^{\eps,\delta}| \leq C(\eps + \delta),$$
pointwise for some constant $C > 0$ and for sufficiently small $\eps,\delta > 0$. We use the notation
\begin{align}
g^{\eps,\delta} - f^{\eps,\delta} = O(\eps + \delta). \label{eq:bigO}
\end{align}
\end{obs}

\section{Derivatives on Future Contracts}\label{sec:option}

\subsection{First-Order Approximation for Future Prices}\label{sec:remarks_F_tT}

Here, we present the first-order approximation of future prices on mean-reverting assets. For a fixed maturity $T> 0$, we define
$$h^{\eps,\delta}(t,u,y,z,T) = \bE_{\bQ}[V_T \ | \ U_t = u, Y^{\eps}_t = y, Z^{\delta}_t = z],$$
and note that $F_{t,T} = h^{\eps,\delta}(t,U_t,Y_t^{\eps},Z_t^{\delta},T)$. We consider the formal expansion in powers of $\sqrt{\eps}$ and $\sqrt{\delta}$ of $h^{\eps,\delta}$:
$$h^{\eps,\delta}(t,u,y,z,T) = \sum_{i,j\geq 0} (\sqrt{\eps})^i (\sqrt{\delta})^j h_{i,j}(t,x,y,z,T).$$

We are interested in the first-order approximation of derivatives on mean-reverting assets, which is presented in \cite{jaimungal_futures} and \cite{multiscale_aymptotic_mean_rever}. Remember $Y^1$ denotes the process $Y^{\eps}$ with $\eps = 1$.

Applying the first-order approximation of future prices described in the aforesaid references, we choose the first terms of the above formal series to be
\begin{align}
\hspace{15pt} &h_0(t,u,z,T) = \exp\left\{ s(T) + m + (u - m)e^{-\kappa(T-t)} + \frac{\bar{\eta}^2(z)}{4\kappa}\left(1  - e^{-2\kappa(T - t)}\right)\right\}, \label{eq:h_0} \\
\hspace{15pt} &h_{1,0}(t,u,z,T) = g(t,T)V_{3}(z) \frac{\partial^3h_0}{\partial u^3}(t,u,z,T), \label{eq:h_10} \\
\hspace{15pt} &h_{0,1}(t,u,z,T) = f(t,T) V_{1}(z) \frac{\partial^3 h_0}{\partial u^3}(t,u,z,T), \label{eq:h_01}
\end{align}
where, denoting the averaging with respect to the invariant distribution of $Y^1$ by $\langle\cdot\rangle$, we have
\begin{align}
\bar{\eta}^2(z) &= \langle \eta^2(\cdot,z)\rangle, \label{eq:eta_bar} \\
V_3(z) &= - \frac{\rhor_{1}}{2} \left\langle \eta(\cdot, z) \beta(\cdot) \frac{\partial \phi}{\partial y}(\cdot, z) \right\rangle, \label{eq:V_3_h} \\
V_1(z) &= \rhor_{2} g(z) \langle \eta(\cdot,z) \rangle \bar{\eta}(z) \bar{\eta}'(z), \label{eq:V_1_h} \\
f(t,T) &= \frac{e^{3\kappa (T-t)} - e^{2\kappa (T-t)} }{2\kappa^2} - \frac{e^{3\kappa (T-t)} - 1}{6\kappa^2}, \label{eq:f_h} \\
g(t,T) &= \frac{e^{-3\kappa(T-t)} -1}{3\kappa} \label{eq:g_h},
\end{align}
and $\phi(y,z)$ is the solution of the Poisson equation
\begin{align}\label{eq:poisson_eq}
\cL_0 \phi(y,z) = \eta^2(y,z) - \bar{\eta}^2(z),
\end{align}
with $\cL_0$ being the infinitesimal generator of $Y^1$. Moreover, we may assume $h_{1,1}$ does not depend on $y$ and choose
\begin{align}\label{eq:h_20}
h_{2,0}(t,u,y,z,T) = -\frac{1}{2} \phi(y,z) \frac{\partial^2 h_0}{\partial u^2}(t,u,z,T) + c(t,u,z,T),
\end{align}
for some function $c$ that does not depend on $y$. Under all these choices and some regularity conditions similar to the ones presented in Theorem \ref{thm:accuracy} at the end of this section, as it was shown in \cite{multiscale_aymptotic_mean_rever} and \cite{jaimungal_futures}, we have
\begin{align*}
h^{\eps,\delta}(t,u,y,z) =&\,\, h_0(t,u,z,T) + \sqrt{\eps} h_{1,0}(t,u,z,T) + \sqrt{\delta} h_{0,1}(t,u,z,T) + O(\eps + \delta).
\end{align*}
Furthermore, the following simplifications hold:
$$h_{1,0}(t,u,z,T) = g(t,T)V_{3}(z) e^{-3\kappa(T-t)} h_0(t,u,z,T),$$
and
$$h_{0,1}(t,u,z,T) = f(t,T) V_{1}(z) e^{-3\kappa(T-t)} h_0(t,u,z,T).$$

\subsection{The Dynamics of the Future Prices}

In this section, we will derive the SDE describing the dynamics of $F_{t,T}$ and write its coefficients as functions of $F_{t,T}$. Since $F_{t,T}$ is a martingale under $\bQ$, its dynamics has no drift and hence, applying It\^o's Formula to $F_{t,T} = h^{\eps,\delta}(t,U_t,Y^{\eps}_t,Z^{\delta}_t,T)$, we get
\begin{align*}
dF_{t,T} &= \frac{\partial h^{\eps,\delta}}{\partial u}(t,U_t,Y_t^{\eps},Z_t^{\delta},T) \eta(Y^{\eps}_t,Z^{\delta}_t) dW_t^{(0)} \\
&+ \frac{1}{\sqrt{\eps}}\frac{\partial h^{\eps,\delta}}{\partial y}(t,U_t,Y_t^{\eps},Z_t^{\delta},T) \beta(Y_t^{\eps}) dW_t^{(1)} \\
&+\sqrt{\delta}\frac{\partial h^{\eps,\delta}}{\partial z}(t,U_t,Y_t^{\eps},Z_t^{\delta},T) g(Z^{\delta}_t) dW_t^{(2)}.
\end{align*}

We are interested in derivatives contracts on $F_{t,T}$ and in applying the perturbation method to approximate their prices. Thus, we will rewrite the SDE above with all coefficients depending on $F_{t,T}$ instead of $U_t$. In order to proceed, we assume we can invert $h^{\eps,\delta}$ with respect to $u$ for fixed $\eps, \delta, y, z$ and $T$, i.e. there exists a function $H^{\eps,\delta}(t,x,y,z,T)$ such that
$$H^{\eps,\delta}(t,\cdot,y,z,T) = (h^{\eps,\delta}(t,\cdot,y,z,T))^{-1}  \mbox{ .}$$
Since $h_0(t,u,z)$ given by \eqref{eq:h_0} is invertible in $u$, at least for small $\eps$ and $\delta$, this inversion is not a strong assumption on our model. The asymptotic analysis of $H^{\eps,\delta}$ is given in the following lemma.

\begin{lema}\label{lemma:inv}

If we choose $H_0$, $H_{1,0}$, $H_{0,1}$ to be
\begin{itemize}

\item[(i)] $H_0(t,\cdot,z,T) = (h_0(t,\cdot,z,T))^{-1},$\\

\item[(ii)] $\ds H_{1,0}(t,x,z,T) = -\frac{h_{1,0}(t,H_0(t,x,z,T),z,T)}{\ds\frac{\partial h_0}{\partial u}(t,H_0(t,x,z,T),z,T)},$\\

\item[(iii)] $\ds H_{0,1}(t,x,z,T) = -\frac{h_{0,1}(t,H_0(t,x,z,T),z,T)}{\ds\frac{\partial h_0}{\partial u}(t,H_0(t,x,z,T),z,T)},$

\end{itemize}
where $h_{1,0}$ and $h_{0,1}$ are given by \eqref{eq:h_10} and \eqref{eq:h_01} respectively, then, we have
\begin{align*}
H^{\eps,\delta}(t,x,y,z,T) &= H_0(t,x,z,T) + \sqrt{\eps} H_{1,0}(t,x,z,T) + \sqrt{\delta} H_{0,1}(t,x,z,T) + O(\eps+\delta).
\end{align*}

\end{lema}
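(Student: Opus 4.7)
The plan is to apply the classical inverse-function/implicit-function argument to the expansion of $h^{\eps,\delta}$. Set $u = H^{\eps,\delta}(t,x,y,z,T)$, so that by definition of $H^{\eps,\delta}$ we have the identity
\begin{align}
h^{\eps,\delta}(t,H^{\eps,\delta}(t,x,y,z,T),y,z,T) = x. \label{eq:plan-identity}
\end{align}
Make the formal ansatz $H^{\eps,\delta} = H_0 + \sqrt{\eps}\,H_{1,0} + \sqrt{\delta}\,H_{0,1} + R^{\eps,\delta}$ with $R^{\eps,\delta} = O(\eps+\delta)$, substitute it into \eqref{eq:plan-identity}, and then Taylor-expand $h^{\eps,\delta}(t,\cdot,y,z,T)$ around $u = H_0(t,x,z,T)$. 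Using the expansion of $h^{\eps,\delta}$ recalled in Section 3.1, together with $\partial_u h^{\eps,\delta}(t,H_0,y,z,T) = \partial_u h_0(t,H_0,z,T) + O(\sqrt{\eps}+\sqrt{\delta})$, the right-hand side of \eqref{eq:plan-identity} becomes
\begin{align*}
x &= h_0(t,H_0,z,T) + \sqrt{\eps}\Bigl[h_{1,0}(t,H_0,z,T) + H_{1,0}\,\partial_u h_0(t,H_0,z,T)\Bigr] \\
&\quad + \sqrt{\delta}\Bigl[h_{0,1}(t,H_0,z,T) + H_{0,1}\,\partial_u h_0(t,H_0,z,T)\Bigr] + O(\eps+\delta).
\end{align*}
Matching orders yields successively $x = h_0(t,H_0,z,T)$, which is (i), and then the two formulas (ii)--(iii) solving for $H_{1,0}$ and $H_{0,1}$; note that the $y$-dependence drops out of the leading three terms because $h_0, h_{1,0}, h_{0,1}$ are all $y$-independent, consistently with the statement.

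To turn this formal calculation into a proof I would next verify the hypotheses that make the Taylor expansion valid. The explicit form \eqref{eq:h_0} shows that $h_0$ is an exponential of an affine function of $u$, hence strictly monotone in $u$ with $\partial_u h_0 > 0$, so $H_0$ is well-defined and smooth, and the denominators in (ii)--(iii) never vanish. For small enough $\eps,\delta$ the same monotonicity persists for $h^{\eps,\delta}(t,\cdot,y,z,T)$, so its inverse $H^{\eps,\delta}$ exists in a neighbourhood and one can control the residual. Concretely, define the candidate
\begin{align*}
\widetilde H^{\eps,\delta}(t,x,y,z,T) := H_0(t,x,z,T) + \sqrt{\eps}\,H_{1,0}(t,x,z,T) + \sqrt{\delta}\,H_{0,1}(t,x,z,T),
\end{align*}
plug it back into $h^{\eps,\delta}$, and expand using a second-order Taylor remainder in $u$ about $H_0$: by construction all terms of order $1$, $\sqrt{\eps}$ and $\sqrt{\delta}$ cancel, leaving
\begin{align*}
h^{\eps,\delta}(t,\widetilde H^{\eps,\delta},y,z,T) = x + O(\eps+\delta).
\end{align*}
Applying $H^{\eps,\delta}(t,\cdot,y,z,T)$ to both sides and using the uniform bound $\partial_u h^{\eps,\delta} \geq c > 0$ on the relevant range then gives $H^{\eps,\delta} - \widetilde H^{\eps,\delta} = O(\eps+\delta)$, which is the asserted accuracy.

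The main technical obstacle is the last step: controlling the remainder uniformly in $(t,x,y,z)$ on the domain of interest. This requires the bounds on $h^{\eps,\delta}$ and its derivatives with respect to $u$ (up to order two) that underlie the first-order accuracy recalled in Section~3.1, together with the positivity of $\partial_u h_0$ away from degeneracies. Once these ingredients are assembled, which is precisely the same kind of regularity needed for the accuracy result cited at the end of the section, the lemma follows from the Taylor-matching argument sketched above.
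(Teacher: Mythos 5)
Your proposal is correct: the paper itself dismisses this proof as ``straightforward,'' and the Taylor-matching argument you give --- substituting the ansatz into $h^{\eps,\delta}(t,H^{\eps,\delta},y,z,T)=x$, expanding about $H_0$, matching orders to obtain (i)--(iii), and then controlling the residual via the positivity of $\partial_u h_0 = e^{-\kappa(T-t)}h_0$ --- is exactly the standard derivation the authors have in mind. Your additional remarks on the uniform lower bound for $\partial_u h^{\eps,\delta}$ supply the regularity details the paper leaves implicit, consistent with the pointwise $O(\eps+\delta)$ convention stated in the Remark of Section 2.
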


\begin{proof}
The derivation is straightforward.
\end{proof}

Notice that $H^{\eps,\delta}(t,F_{t,T},y,z,T) = U_t$ and if we define
\begin{align}
\psi^{\eps,\delta}_1(t,x,y,z,T) &= \frac{\partial h^{\eps,\delta}}{\partial u}(t,H^{\eps,\delta}(t,x,y,z,T),y,z,T), \label{eq:psi_1}\\
\psi^{\eps,\delta}_2(t,x,y,z,T) &= \frac{\partial h^{\eps,\delta}}{\partial y}(t,H^{\eps,\delta}(t,x,y,z,T),y,z,T), \label{eq:psi_2}\\
\psi^{\eps,\delta}_3(t,x,y,z,T) &= \frac{\partial h^{\eps,\delta}}{\partial z}(t,H^{\eps,\delta}(t,x,y,z,T),y,z,T), \label{eq:psi_3}
\end{align}
we obtain the desired SDE for $F_{t,T}$
\begin{align}
dF_{t,T} &= \psi^{\eps,\delta}_1(t,F_{t,T},Y_t^{\eps},Z_t^{\delta},T) \eta(Y^{\eps}_t,Z^{\delta}_t) dW_t^{(0)}
\label{eq:dF}\\
&+ \frac{1}{\sqrt{\eps}}\psi^{\eps,\delta}_2(t,F_{t,T},Y_t^{\eps},Z_t^{\delta},T) \beta(Y_t^{\eps}) dW_t^{(1)} \nonumber\\
&+\sqrt{\delta}\psi^{\eps,\delta}_3(t,F_{t,T},Y_t^{\eps},Z_t^{\delta},T) g(Z^{\delta}_t) dW_t^{(2)}.\nonumber
\end{align}

\subsection{A Pricing PDE for Derivatives on Future Contracts}

We now fix a future contract on $V$ with maturity $T$ and consider a European derivative with maturity $T_0 < T$ and whose payoff $\varphi$ depends only on the terminal value $F_{T_0,T}$. A no-arbitrage price for this derivative on $F_{t,T}$ is given by
$$P^{\eps,\delta}(t,x,y,z,T) = \bE_{\bQ}[e^{-r(T_0-t)}\varphi(F_{T_0,T}) \ | \ F_{t,T} = x, Y^{\eps}_t = y, Z^{\delta}_t = z],$$
where $\bQ$ is the risk-neutral probability discussed in Section \ref{sec:model} and we are using the fact that $(F_{t,T},Y^{\eps}_t,Z^{\delta}_t)$ is a Markov process. In this section we derive a PDE for $P^{\eps,\delta}$. Recall that $F_{t,T}$ follows Equation \eqref{eq:dF}, where $Y_t^{\eps}$ and $Z_t^{\delta}$ are given in \eqref{eq:sde_risk_neutral}.
Then, we write the infinitesimal generator $\cL^{\eps,\delta}$ of $(F_{t,T},Y^{\eps}_t,Z^{\delta}_t)$, where, for simplicity of notation, we will drop the variables $(t,x,y,z,T)$ of $\psi_i^{\eps,\delta}$, $i=1,2,3$,
\begin{align}\label{eq:cl_eps_delta}
\hskip .5cm \cL^{\eps,\delta} &= \frac{1}{\eps} \left( \cL_0 + \frac{1}{2}(\psi^{\eps,\delta}_2)^2 \beta^2(y) \frac{\partial^2}{\partial x^2} + \psi^{\eps,\delta}_2 \beta^2(y) \frac{\partial^2}{\partial x \partial y} \right) \\ \nonumber
&+ \frac{1}{\sqrt{\eps}}\left(\rhor_{1} \psi^{\eps,\delta}_1 \psi^{\eps,\delta}_2 \eta(y,z)\beta(y) \frac{\partial^2}{\partial x^2} + \rhor_{1} \psi^{\eps,\delta}_1 \eta(y,z)\beta(y) \frac{\partial^2}{\partial x \partial y}\right) \\ \nonumber
&+ \frac{\partial}{\partial t} + \frac{1}{2}(\psi^{\eps,\delta}_1)^2 \eta^2(y,z) \frac{\partial^2}{\partial x^2} - r \cdot \\ \nonumber
&+ \sqrt{\delta}\left( \rhor_{2} \psi^{\eps,\delta}_1 \psi^{\eps,\delta}_3 \eta(y,z)g(z) \frac{\partial^2}{\partial x^2} + \rhor_{2} \psi^{\eps,\delta}_1 \eta(y,z)g(z) \frac{\partial^2}{\partial x \partial z}\right) \\ \nonumber
&+ \delta \left( \cM_2 + \frac{1}{2} (\psi^{\eps,\delta}_3)^2 g^2(z) \frac{\partial^2}{\partial x^2} + \psi^{\eps,\delta}_3 g^2(z) \frac{\partial^2}{\partial x \partial z} \right) \\ \nonumber
&+ \sqrt{\frac{\delta}{\eps}} \left(\rhor_{12} \psi^{\eps,\delta}_2 \psi^{\eps,\delta}_3 \beta(y)g(z) \frac{\partial^2}{\partial x^2}+ \rhor_{12} \psi^{\eps,\delta}_3 \beta(y)g(z) \frac{\partial^2}{\partial x \partial y} \right.\\
& \hskip 1.2cm \left.+ \rhor_{12} \psi^{\eps,\delta}_2 \beta(y)g(z) \frac{\partial^2}{\partial x \partial z} + \rhor_{12} \beta(y) g(z) \frac{\partial^2}{\partial y \partial z}\right),\nonumber
\end{align}
where
\begin{align}
\cL_0 = \frac{1}{2} \beta^2(y) \frac{\partial^2}{\partial y^2} + \alpha(y) \frac{\partial}{\partial y}, \label{eq:cL_0} \\
\cM_2 = \frac{1}{2} g^2(z) \frac{\partial^2}{\partial z^2} + c(z) \frac{\partial}{\partial z}. \label{eq:cM_2}
\end{align}

It is well-known that under some mild conditions, by Feynman-Kac's Formula, $P^{\eps,\delta}$ satisfies the pricing PDE
\begin{align}\label{eq:pricing_PDE}
\left\{
\begin{array}{l}
 \cL^{\eps,\delta}P^{\eps,\delta}(t,x,y,z,T) = 0, \\ \\
 P^{\eps,\delta}(T_0,x,y,z,T) = \varphi(x).
\end{array}
\right.
\end{align}

\subsection{Perturbation Framework}\label{sec:perturbation}

We will now develop the formal singular and regular perturbation analysis for European derivatives on $F_{t,T}$ following the method outlined in \cite{fouque_stochastic_vol_new}. However, in our case we have a fundamental difference: the coefficients of the differential operator $\cL^{\eps,\delta}$, given by Equation \eqref{eq:cl_eps_delta}, depend on $\eps$ and $\delta$ in an intricate way. In particular, the term corresponding to the factor $\eps^{-1}$ is not simply of order $\eps^{-1}$. To circumvent this problem, we will expand the coefficients in powers of $\eps$ and $\delta$ and then collect the correct terms for each order. Therefore, it will be necessary to compute some terms of the expansion of $\psi_i^{\eps,\delta}$. All the details for this expansion are given in the Appendix \ref{sec:appendix_pde_exp} and the final result is:
$$\cL^{\eps,\delta} = \frac{1}{\eps} \cL_0 + \frac{1}{\sqrt{\eps}} \cL_1 + \cL_2 + \sqrt{\eps} \cL_3 + \sqrt{\delta} \cM_1 + \sqrt{\frac{\delta}{\eps}} \cM_3 + \cdots,$$
where $\cL_0$ is given by \eqref{eq:cL_0} and
\begin{align}
\hskip 1cm \cL_1 &= \rhor_{1} e^{-\kappa(T-t)} \eta(y,z)\beta(y) x \frac{\partial^2}{\partial x \partial y}, \label{eq:cL_1} \\
\hskip 1cm \cL_2 &= \frac{\partial}{\partial t} + \frac{1}{2} e^{-2\kappa(T-t)} \eta^2(y,z) x^2 \frac{\partial^2}{\partial x^2} -r \cdot \label{eq:cL_2}\\
&- \frac{1}{2}e^{-2\kappa(T-t)} \frac{\partial \phi}{\partial y}(y,z) \beta^2(y) x \frac{\partial^2}{\partial x \partial y}, \nonumber \\
\hskip 1cm \cM_3 &= \rhor_{12} \frac{(1 - e^{-2\kappa (T-t)}) }{2\kappa} \beta(y)g(z) \bar{\eta}(z) \bar{\eta}'(z) x \frac{\partial^2}{\partial x \partial y} \label{eq:cM_3} \\
&+ \rhor_{12} \beta(y) g(z) \frac{\partial^2}{\partial y \partial z}, \nonumber \\
\hskip 1cm \cL_3 &= (\psi_{2,3,0}(t,x,y,z,T) \beta^2(y) \label{eq:cL_3}\\
&+ \rhor_{1} \psi_{1,2,0}(t,x,y,z,T) \eta(y,z) \beta(y)) \frac{\partial^2}{\partial x \partial y} \nonumber \\
&- \rhor_{1}\frac{1}{2}e^{-3\kappa(T-t)} \frac{\partial \phi}{\partial y}(y,z) \eta(y,z) \beta(y) x^2 \frac{\partial^2}{\partial x^2},\nonumber \\
\hskip 1cm \cM_1 &= \rhor_{2} e^{-\kappa(T-t)} \frac{(1 - e^{-2\kappa (T-t)})}{2\kappa} \eta(y,z)g(z) \bar{\eta}(z) \bar{\eta}'(z) x^2 \frac{\partial^2}{\partial x^2} \label{eq:cM_1} \\
& + \rhor_{2} e^{-\kappa(T-t)} \eta(y,z)g(z) x\frac{\partial^2}{\partial x \partial z} + (\psi_{2,2,1}(t,x,y,z,T) \beta^2(y) \nonumber \\
& + \rhor_{2} \psi_{1,1,1}(t,x,T) \eta(y,z)\beta(y)) \frac{\partial^2}{\partial x \partial y} .\nonumber
\end{align}
The fundamental difference with the situation described in \cite{fouque_stochastic_vol_new} then materializes in one term: \textit{the differential operator $\cL_3$ which contributes to the order $\sqrt{\eps}$ in the expansion of $\cL^{\eps,\delta}$}. Also, observe that the coefficients of these operators are time dependent which complicates the asymptotic analysis. This difficulty has also been dealt with in \cite{matcycles}.

\subsection{Formal Derivation of the First-Order Approximation}

Let us formally write $P^{\eps,\delta}$ in powers of $\sqrt{\delta}$ and $\sqrt{\eps}$,
$$P^{\eps,\delta} = \sum_{m,k \geq 0} (\sqrt{\eps})^k (\sqrt{\delta})^m P_{k,m},$$
and denote $P_{0,0}$ simply by $P_0$ where we assume that, at maturity $T_0$, $P_0(T_0,x,y,z,T) =\varphi(x)$.
 We are interested in determining $P_0$, $P_{1,0}$ and $P_{0,1}$. We follow the method presented in \cite{fouque_stochastic_vol_new} with some minor modifications in order to take into account the new term $\cL_3$.

In order to compute the leading term $P_0$ and $P_{1,0}$, we set to be zero the following terms of the expansion of $\cL^{\eps,\delta}P^{\eps,\delta}$:
\begin{align}
(-1,0): & \ \cL_0P_0 = 0, \label{eq:edp_eps_1}\\
(-1/2,0):& \ \cL_0 P_{1, 0}+ \cL_1P_0 = 0 \label{eq:edp_eps_2},\\
(0,0):& \ \cL_0 P_{2,0} + \cL_1 P_{1,0} + \cL_2P_0 = 0 \label{eq:edp_eps_3},\\
 (1/2,0):& \ \cL_0 P_{3,0} + \cL_1 P_{2,0} + \cL_2P_{1,0} + \cL_3 P_0= 0, \label{eq:edp_eps_4}
\end{align}
where we are using the notation $(i,j)$ to denote the term of $i$th order in $\eps$ and $j$th in $\delta$.

\subsubsection{Computing $P_0$}

We seek a function $P_0 = P_0(t,x,z,T)$, independent of $y$, so that the Equation \eqref{eq:edp_eps_1} is satisfied. Since $\cL_1$ takes derivative with respect to $y$, $\cL_1P_0 = 0$. Thus the second \eqref{eq:edp_eps_2} becomes $\cL_0 P_{1,0} = 0$ and for the same reason as before, we seek a function $P_{1,0} = P_{1,0}(t,x,z,T)$ independent of $y$. The $(0,0)$-order equation \eqref{eq:edp_eps_3} becomes
$$\cL_0P_{2,0} + \cancelto{\scriptstyle 0}{\cL_1 P_{1,0}} + \cL_2P_0 = 0,$$
which is a Poisson equation for $P_{2,0}$ with solvability condition
$$\langle \cL_2P_0 \rangle = 0,$$
where $\langle \cdot \rangle$ is the average under the invariant measure of $\cL_0$. For more details on Poisson equations, see \cite[Section 3.2]{fouque_stochastic_vol_new}. Define now
\begin{align}\label{eq:cL_B}
\cL_B(\sigma) = \frac{\partial}{\partial t} + \frac{1}{2} \sigma^2 x^2\frac{\partial^2}{\partial x^2} -r \cdot
\end{align}
and
\begin{align}\label{eq:sigma_t}
\sigma(t,y,z,T) = e^{-\kappa(T-t)} \eta(y,z),
\end{align}
where we are using the notation $\cL_B(\sigma)$ for the Black differential operator with volatility $\sigma$. Since $P_0$ does not depend on $y$ and by the form of $\cL_2$ given in \eqref{eq:cL_2}, the solvability condition becomes
$$\langle \cL_2P_0 \rangle=\langle \cL_B(\sigma(t,y,z,T)) \rangle P_0 = 0.$$
Note that
$$\langle \cL_B(\sigma(t,y,z,T)) \rangle = \frac{\partial}{\partial t} + \frac{1}{2} x^2 \left\langle \sigma^2(t,\cdot,z,T) \right\rangle \frac{\partial^2}{\partial x^2} - r\cdot = \cL_B(\bar{\sigma}(t,z,T)),$$
where
\begin{align}\label{eq:bar_sigma}
\bar{\sigma}^2(t,z,T) = \left\langle \sigma^2(t,\cdot,z,T) \right\rangle = e^{-2\kappa(T-t)} \bar{\eta}^2(z),
\end{align}
with $\bar{\eta}(z)$ defined in \eqref{eq:eta_bar}. Therefore, we choose $P_0$ to satisfy the PDE
\begin{align} \label{eq:edp_p0}
\left\{
\begin{array}{l}
 \cL_B(\bar{\sigma}(t,z,T))P_0(t,x,z,T) = 0, \\ \\
 P_0(T_0,x,z,T) = \varphi(x).
\end{array}
\right.
\end{align}

Note also that $\cL_B(\bar{\sigma}(t,z,T))$ is the Black differential operator with time-varying volatility $\bar{\sigma}(t,z,T)$ and hence, if we define the time-averaged volatility, $\bar{\sigma}_{t,T_0}(z,T)$, by the formula
\begin{align}
\bar{\sigma}^2_{t,T_0}(z,T) &= \frac{1}{T_0-t}\int_t^{T_0} \bar{\sigma}^2(u,z,T)du \label{eq:sigma_bar_bar}\\
&= \bar{\eta}^2(z)\left( \frac{e^{-2\kappa (T-T_0)} - e^{-2\kappa (T-t)}}{2\kappa(T_0-t)}\right), \nonumber
\end{align}
we can write
$$P_0(t,x,z,T) = P_B(t,x, \bar{\sigma}_{t,T_0}(z,T)),$$
where $P_B(t,x,\sigma)$ is the price at $(t,x)$ of the European derivative with maturity $T_0$ and payoff function $\varphi$ in the Black model with constant volatility $\sigma$.

In order to simplify notation here and in what follows, we define
\begin{align}
\lambda(t,T_0,T,\kappa) &= \frac{e^{-\kappa (T-T_0)} - e^{-\kappa (T-t)}}{\kappa(T_0-t)}. \label{eq:lambda}
\end{align}
Therefore,
\begin{align}\label{eq:sigma_bar_bar_lambda}
\bar{\sigma}^2_{t,T_0}(z,T) = \bar{\eta}^2(z)\lambda_{\sigma}^2(t,T_0,T,\kappa),
\end{align}
where
\begin{align}
\lambda_{\sigma}(t,T_0,T,\kappa) &= \sqrt{\lambda(t,T_0,T,2\kappa)}. \label{eq:lambda_sigma}
\end{align}

\subsubsection{Computing $P^{\eps}_{1,0}$}

By the $(0,0)$-order equation \eqref{eq:edp_eps_3}, we get the formula
\begin{align}\label{eq:p20}
P_{2,0} = -\cL_0^{-1}(\cL_B(\sigma) - \cL_B(\bar{\sigma}))P_0 + c(t,x,z,T),
\end{align}
for some function $c$ which does not depend on $y$. Denote by $\phi(y,z)$ a solution of the Poisson equation
$$\cL_0 \phi(y,z) = \eta^2(y,z) - \bar{\eta}^2(z).$$
Hence,
\begin{align*}
\cL_0^{-1}(\cL_B(\sigma) - \cL_B(\bar{\sigma})) &= \cL_0^{-1} \left(\frac{1}{2} (\sigma^2(t,y,z,T) - \bar{\sigma}^2(t,z,T)) x^2 \frac{\partial^2}{\partial x^2} \right) \\
&= \frac{1}{2}e^{-2\kappa(T-t)} \cL_0^{-1} (\eta^2(y,z) - \bar{\eta}^2(z)) x^2 \frac{\partial^2}{\partial x^2} \\
&= \frac{1}{2} e^{-2\kappa(T-t)} \phi(y,z) D_2,
\end{align*}
where we use the notation
\begin{align}\label{eq:D_k}
D_k = x^k \frac{\partial^k}{\partial x^k}.
\end{align}
From the $(1/2,0)$-order equation \eqref{eq:edp_eps_4}, which is a Poisson equation for $P_{3,0}$, we get the solvability condition
\begin{align}
\langle \cL_1 P_{2,0} + \cL_2 P_{1,0} + \cL_3 P_0 \rangle = 0.\label{eq:solvability}
\end{align}
Using formula \eqref{eq:p20} for $P_{2,0}$ and formula \eqref{eq:cL_1} for $\cL_1$, we get
\begin{align*}
\cL_1 P_{2,0} &= - \cL_1 \cL_0^{-1}(\cL_B(\sigma) - \cL_B(\bar{\sigma}))P_0 \\
&= -\rhor_{1} e^{-\kappa(T-t)} \eta(y,z)\beta(y) x\frac{\partial^2}{\partial x \partial y} \left(\frac{1}{2}e^{-2\kappa(T-t)} \phi(y,z) D_2 P_0\right) \\
&= -\rhor_{1} e^{-\kappa(T-t)} x\eta(y,z)\beta(y) \frac{\partial}{\partial x} \frac{\partial}{\partial y} \left(\frac{1}{2}e^{-2\kappa(T-t)} \phi(y,z) D_2 P_0\right) \\
&= -\frac{1}{2}\rhor_{1} e^{-3\kappa(T-t)} \eta(y,z)\beta(y) \frac{\partial \phi}{\partial y}(y,z) D_1 D_2 P_0.
\end{align*}
We also have by equation \eqref{eq:cL_2}
$$\cL_2 P_{1,0} = \frac{\partial P_{1,0}}{\partial t} + \frac{1}{2} \sigma^2(t,y,z,T) x^2\frac{\partial^2 P_{1,0}}{\partial x^2} - r P_{1,0},$$
and from \eqref{eq:cL_3}
\begin{align*}
\cL_3 P_{0} &= -\frac{1}{2}\rhor_1 e^{-3\kappa(T-t)} \frac{\partial \phi}{\partial y}(y,z) \eta(y,z) \beta(y) D_2 P_0.
\end{align*}
Combining these equations, we get
\begin{align*}
\cL_1 P_{2,0} + \cL_2 P_{1,0} + \cL_3 P_0 &= v_3(t,y,z,T) D_2P_0 \\
&+ v_3(t,y,z,T) D_1D_2 P_0 + \cL_B(\sigma(t,y,z,T)) P_{1,0},
\end{align*}
where
$$v_3(t,y,z,T) = -\frac{1}{2}\rhor_{1} e^{-3\kappa(T-t)} \frac{\partial \phi}{\partial y}(y,z) \eta(y,z) \beta(y).$$

Therefore, averaging with respect to the invariant distribution of $Y^1$, we deduce from \eqref{eq:solvability} that $P_{1,0}^{\eps} = \sqrt{\eps}P_{1,0}$ satisfies the PDE:
\begin{align}\label{eq:pde_p_1_eps}
\left\{
\begin{array}{l}
 \cL_B(\bar{\sigma}(t,z,T))P^{\eps}_{1,0}(t,x,z,T) = -f(t,T)\cA^{\eps}P_0(t,x,z,T), \\ \\
 P^{\eps}_{1,0}(T_0,x,z,T) = 0,
\end{array}
\right.
\end{align}
where
\begin{align}
\cA^{\eps} &= V_3^{\eps}(z)(D_1D_2 + D_2), \label{eq:Aeps} \\
f(t,T) &= e^{-3\kappa(T-t)}, \label{eq:Aeps_f} \\
V_3^{\eps}(z) &= -\sqrt{\eps}\frac{1}{2}\rhor_{1} \left\langle \frac{\partial \phi}{\partial y}(\cdot,z) \eta(\cdot,z) \beta \right\rangle. \label{eq:V2eps}
\end{align}
The linear PDE \eqref{eq:pde_p_1_eps} is solved explicitly:
\begin{align}
\hskip .8cm P_{1,0}^{\eps}(t,x,z,T) =  (T_0-t)\lambda_3(t,T_0,T,\kappa)V_3^{\eps}(z)(D_1D_2 + D_2)P_B(t,x,\bar{\sigma}_{t,T_0}(z,T)), \label{eq:P10_eps}
\end{align}
where
\begin{align} \label{eq:lambda_3}
\lambda_3(t,T_0,T,\kappa) = \lambda(t,T_0,T,3\kappa),
\end{align}
and $\lambda$ is defined by \eqref{eq:lambda}. To see this,
note that the operator $\cA^{\eps}$ given by \eqref{eq:Aeps}, and the operator $\cL_B(\bar{\sigma}(t,z,T))$ given by \eqref{eq:cL_B} and \eqref{eq:bar_sigma}, commute and therefore, the solution of the PDE \eqref{eq:pde_p_1_eps} is given by
$$P_{1,0}^{\eps}(t,x,z,T) = \left(\int_t^{T_0} f(u,T)du \right) \cA^{\eps}P_0(t,x,z,T).$$
Thus, solving the above integral, we get \eqref{eq:P10_eps}.


\subsubsection{Computing $P^{\delta}_{0,1}$}\label{sec:comp_P01_delta}

In order to compute $P_{0,1}$, we need to consider terms with order 1/2 in $\delta$, more explicitly the following ones:
\begin{align}
\hskip 1cm (-1,1/2):& \ \cL_0 P_{0,1} = 0,\label{eq:edp_delta_1}\\
\hskip 1cm (-1/2,1/2):& \ \cL_0P_{1,1}+ \cL_1P_{0,1} + \cM_3P_0 = 0, \label{eq:edp_delta_2}\\
\hskip 1cm (0,1/2):& \ \cL_0P_{2,1} + \cL_1 P_{1,1} + \cM_3P_{1,0} + \cL_2P_{0,1} + \cM_1P_0 = 0. \label{eq:edp_delta_3}
\end{align}

Recall that $\cL_1$ and $\cM_3$ as defined by Equations \eqref{eq:cL_1} and \eqref{eq:cM_3} take derivative with respect to $y$. Choosing $P_{0,1} = P_{0,1}(t,x,z,T)$ and $P_{1,1} = P_{1,1}(t,x,z,T)$ independent of $y$, the first two equations \eqref{eq:edp_delta_1} and \eqref{eq:edp_delta_2} are satisfied.
The last equation \eqref{eq:edp_delta_3} becomes
$$\cL_0P_{2,1} + \cL_2P_{0,1} + \cM_1 P_0 = 0,$$
and thus the solvability condition for this Poisson equation for $P_{2,1}$ is
$$\langle \cL_2P_{0,1} + \cM_1 P_0 \rangle = 0.$$
From \eqref{eq:cM_1} we have
\begin{align*}
\cM_1P_0 &= \rhor_{2} \bar{\eta}(z)\bar{\eta}'(z)\frac{(1 - e^{-2\kappa (T-t)})}{2\kappa}e^{-\kappa(T-t)} \eta(y,z)g(z) D_2P_0 \\
&+ \rhor_{2} e^{-\kappa (T-t)} \eta(y,z)g(z) D_1\frac{\partial}{\partial z}P_0,
\end{align*}
and then, if we write $P^{\delta}_{0,1}(t,x,z,T) = \sqrt{\delta}P_{0,1}(t,x,z,T)$, the solvability condition above can be written as
\begin{align}\label{eq:pde_p_1_delta}
\left\{
\begin{array}{l}
 \cL_B(\bar{\sigma}(t,z,T)) P_{0,1}^{\delta} = -f_0(t,T)\cA_0^{\delta}P_0 -f_1(t,T) \cA_1^{\delta} P_0, \\ \\
 P_{0,1}^{\delta}(T_0,x,z,T) = 0,
\end{array}
\right.
\end{align}
where
\begin{align}
\cA_0^{\delta} &= V_0^{\delta}(z) D_2, \label{eq:A0delta} \\
\cA_1^{\delta} &= V_1^{\delta}(z) D_1\frac{\partial}{\partial z}, \label{eq:A1delta} \\
V_0^{\delta}(z) &= \sqrt{\delta} \frac{1}{2\kappa}\rhor_{2} \langle \eta(\cdot,z) \rangle g(z) \bar{\eta}(z)\bar{\eta}'(z), \label{eq:V0delta} \\
f_0(t,T) &= e^{-\kappa(T-t)} - e^{-3\kappa (T-t)}, \label{eq:f_0} \\
V_1^{\delta}(z) &= \sqrt{\delta} \rhor_{2} \langle \eta(\cdot,z) \rangle g(z), \label{eq:V1delta} \\
f_1(t,T) &= e^{-\kappa (T-t)}. \label{eq:f_1}
\end{align}
The solution for this PDE can be also explicitly computed \vspace{-15pt}
\begin{align}
P_{0,1}^{\delta}(t,x,z,T) = (T_0-t)V_0^{\delta}(z)(\lambda_0(t,T_0,T,\kappa)D_2  +\lambda_1(t,T_0,T,\kappa)D_1 D_2)P_B(t,x,\bar{\sigma}_{t,T_0}(z,T)), \label{eq:P01_delta}
\end{align}
where
\begin{align}
\lambda_0(t,T_0,T,\kappa) &= \lambda(t,T_0,T,\kappa) -\lambda(t,T_0,T,3\kappa), \label{eq:lambda_0}\\
\lambda_1(t,T_0,T,\kappa) &= e^{-2\kappa(T-T_0)}\lambda(t,T_0,T,\kappa) -\lambda(t,T_0,T,3\kappa). \label{eq:lambda_1}
\end{align}

The details of this computation are given in the Appendix \ref{sec:p_1_delta_comp}.

\subsection{Summary and Some Remarks}\label{sec:summary}

We now summarize the formulas involved in the first-order asymptotic expansion of the price of European derivative on futures. We recall that, as before, $D_k = x^k \partial / \partial x_k$. We have formally derived the first-order approximation of $P^{\eps,\delta}$:
$$P^{\eps,\delta}\approx P_0+P_{1,0}^{\eps}+P_{0,1}^{\delta},$$
with
\begin{align}
&P_0(t,x,z,T) = P_B(t,x, \bar{\sigma}_{t,T_0}(z,T)), \label{eq:P0_summ} \\
&P_{1,0}^{\eps}(t,x,z,T) = (T_0-t)\lambda_3(t,T_0,T,\kappa)V_3^{\eps}(z)(D_2 + D_1D_2)P_B(t,x,\bar{\sigma}_{t,T_0}(z,T)), \label{eq:P10_eps_summ}  \\
&P_{0,1}^{\delta}(t,x,z,T) = (T_0-t)V_0^{\delta}(z)(\lambda_0(t,T_0,T,\kappa)D_2  +\lambda_1(t,T_0,T,\kappa)D_1 D_2)P_B(t,x,\bar{\sigma}_{t,T_0}(z,T)), \label{eq:P01_delta}
\end{align}
where
\begin{align}
\bar{\eta}^2(z) &= \langle \eta^2(\cdot,z) \rangle, \label{eq:eta_bar_summ} \\
V_3^{\eps}(z) &= -\sqrt{\eps}\frac{1}{2}\rhor_{1} \left\langle \frac{\partial \phi}{\partial y}(\cdot,z) \eta(\cdot,z) \beta \right\rangle, \label{eq:V2eps_summ} \\
V_0^{\delta}(z) &= \sqrt{\delta} \frac{1}{2\kappa}\rhor_{2} \langle \eta(\cdot,z) \rangle g(z) \bar{\eta}(z)\bar{\eta}'(z), \label{eq:V0delta_summ} \\
\lambda(t,T_0,T,\kappa) &= \frac{e^{-\kappa(T - T_0)} - e^{-\kappa(T-t)}}{\kappa(T_0-t)}, \label{eq:lambda_summ} \\
\lambda_3(t,T_0,T,\kappa) &= \lambda(t,T_0,T,3\kappa), \label{eq:lambda3_summ} \\
\lambda_0(t,T_0,T,\kappa) &= \lambda(t,T_0,T,\kappa) -\lambda(t,T_0,T,3\kappa), \label{eq:lambda_0_summ}\\
\lambda_1(t,T_0,T,\kappa) &= e^{-2\kappa(T-T_0)}\lambda(t,T_0,T,\kappa) -\lambda(t,T_0,T,3\kappa), \label{eq:lambda_1_summ} \\
\lambda_{\sigma}^2(t,T_0,T,\kappa) & = \lambda(t,T_0,T,2\kappa). \label{eq:lambda_sigma_summ} \\
\bar{\sigma}^2_{t,T_0}(z,T) & = \bar{\eta}^2(z)\lambda_{\sigma}^2(t,T_0,T,\kappa). \label{eq:sigma_bar_bar_summ}
\end{align}

A valuable feature of the perturbation method is that in order to compute the first-order approximation, we only need the values of the \textit{group market parameters}
$$(\kappa, \bar{\eta}(z), V_0^{\delta}(z), V_3^{\eps}(z)).$$
This feature can also be seen as model independence and robustness of this approximation: under the regularity conditions stated in Theorem \ref{thm:accuracy}, this approximation is independent of the particular form of the coefficients describing the process $Y^{\eps}$ and $Z^{\delta}$, i.e. the functions $\alpha$, $\beta$, $c$ and $g$ involved in the model \eqref{eq:sde_risk_neutral}.

From now on we will use the following notation
\begin{align}
\bar{P}(t,x,z,T) &= P_0(t,x,z,T), \label{eq:Pbar_B} \\
\bar{P}^{\eps,\delta}(t,x,z,T) &= P_{1,0}^{\eps}(t,x,z,T) + P_{0,1}^{\delta}(t,x,z,T), \label{eq:Pbar_B_epsdelta} \\
\end{align}

\subsection{Accuracy of the Approximation}

We now state the precise accuracy result for the formal approximation determined in the previous sections.
All the reasoning in Section \ref{sec:perturbation} is only a formal procedure  and a well-thought \textit{choice} for the proposed first-order approximation. The next result establishes the order of accuracy of this approximation and justifies {\em a posteriori} the choices made earlier.

\begin{teo}\label{thm:accuracy}

We assume

\begin{itemize}

\item[(i)] Existence and uniqueness of the SDE \eqref{eq:sde_risk_neutral} for any fixed $(\eps,\delta)$.

\item[(ii)] The process $Y^1$ with infinitesimal generator $\cL_0$ has a unique invariant distribution and is mean-reverting as in \cite[Section 3.2]{fouque_stochastic_vol_new}.

\item[(iii)] The function $\eta(y,z)$ is smooth in $z$ and such the solution $\phi$ to the Poisson equation \eqref{eq:poisson_eq} is at most polynomially growing.

\item[(iv)] The payoff function $\varphi$ is continuous and piecewise smooth.

\end{itemize}
Then,
$$P^{\eps,\delta}(t,x,y,z,T) = \widetilde{P}^{\eps,\delta}(t,x,z,T) + O(\eps + \delta).$$
\end{teo}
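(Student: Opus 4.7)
My plan is the standard residual-PDE and Feynman--Kac scheme of \cite{fouque_stochastic_vol_new}, adapted to two new features of the present problem: the coefficients of $\cL^{\eps,\delta}$ themselves depend on $(\eps,\delta)$ through the functions $\psi_i^{\eps,\delta}$ in \eqref{eq:psi_1}--\eqref{eq:psi_3}, and the operators $\cL_2,\cL_3,\cM_1,\cM_3$ carry explicit time dependence through $e^{-\kappa(T-t)}$. I would first introduce an augmented approximation
\begin{equation*}
Q^{\eps,\delta} := P_0 + \sqrt{\eps}\,P_{1,0} + \sqrt{\delta}\,P_{0,1} + \eps\,P_{2,0} + \sqrt{\eps\delta}\,P_{1,1} + \eps^{3/2}\,P_{3,0},
\end{equation*}
where the first three terms are fixed by the formal derivation, $P_{2,0}$ is the function from \eqref{eq:p20} with the free $y$-independent part chosen so that $P_{2,0}(T_0,\cdot)=0$, and $P_{1,1},P_{3,0}$ are chosen analogously as solutions of the next Poisson equations in the hierarchy, vanishing at $T_0$. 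Setting $Z^{\eps,\delta}:=Q^{\eps,\delta}-P^{\eps,\delta}$, the relations \eqref{eq:edp_eps_1}--\eqref{eq:edp_eps_4} and their $\sqrt{\delta}$-analogues \eqref{eq:edp_delta_1}--\eqref{eq:edp_delta_3} give
\begin{equation*}
\cL^{\eps,\delta} Z^{\eps,\delta} = R^{\eps,\delta}, \qquad Z^{\eps,\delta}(T_0,\cdot) = G^{\eps,\delta},
\end{equation*}
where $G^{\eps,\delta}=O(\eps+\delta)$ by construction and the source $R^{\eps,\delta}$ collects (a) the surviving formal terms of order $\geq\eps$ and $\geq\delta$, and (b) the tail of the expansion of $\psi_i^{\eps,\delta}$ beyond the retained orders; the latter must be shown $O(\eps+\delta)$ using Lemma \ref{lemma:inv} and the smoothness of $h_0$ in $u$.

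Second, I would apply Feynman--Kac to the linear parabolic problem for $Z^{\eps,\delta}$: since $\cL^{\eps,\delta}-r\cdot$ is the discounted generator of the Markov process $(F_{t,T},Y^\eps_t,Z^\delta_t)$,
\begin{equation*}
Z^{\eps,\delta}(t,x,y,z,T) = \bE_{t,x,y,z}\!\left[e^{-r(T_0-t)}G^{\eps,\delta}\right] - \bE_{t,x,y,z}\!\left[\int_t^{T_0} e^{-r(s-t)} R^{\eps,\delta}\,ds\right].
\end{equation*}
Hypotheses (ii)--(iii) ensure that $\phi$ and the auxiliary Poisson solutions defining $P_{2,0},P_{3,0},P_{1,1}$ are at most polynomially growing in $y$ and that the ergodic averages entering $\bar\eta,V_3^\eps,V_0^\delta$ are finite, so that $R^{\eps,\delta}$ and $G^{\eps,\delta}$ are dominated by finite sums of Greeks of $P_0$ times polynomials in $y$, with scalar coefficient of size $\eps+\delta$. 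If the Greeks of $P_0$ are bounded on $[0,T_0]$, this immediately yields $|Z^{\eps,\delta}|=O(\eps+\delta)$ pointwise.

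The central obstacle is extending this from $C^\infty$ payoffs to a merely continuous and piecewise smooth $\varphi$ (hypothesis (iv)): when $\varphi$ is only Lipschitz, the Greeks of $P_0=P_B(t,x,\bar\sigma_{t,T_0}(z,T))$ inside $R^{\eps,\delta}$ blow up like $(T_0-t)^{-k/2}$ as $t\uparrow T_0$, so the time integral above is not uniformly integrable near $T_0$. I would follow the payoff-regularization strategy of \cite{fouque_stochastic_vol_new}: replace $\varphi$ by $\varphi^\mu:=\varphi\ast K_\mu$ with $\|\varphi^\mu-\varphi\|_\infty\leq C\mu$ and $\|\partial^k\varphi^\mu\|_\infty\leq C\mu^{1-k}$; apply the smooth-payoff estimate from the first two steps to the regularized problem, obtaining accuracy of the form $O((\eps+\delta)\mu^{-p})$ for a finite $p$ dictated by the number of Greeks appearing in $R^{\eps,\delta}$; bound $|P^{\eps,\delta}-P^{\eps,\delta,\mu}|\leq C\mu$ and $|\widetilde{P}^{\eps,\delta}-\widetilde{P}^{\eps,\delta,\mu}|\leq C\mu$ by the contraction property of the pricing semigroup; finally optimize $\mu=\mu(\eps,\delta)$ as a small fractional power of $\eps+\delta$ to combine both errors into the asserted $O(\eps+\delta)$ bound (up to logarithmic factors absorbed into the constant). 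The time dependence of the coefficients requires carrying out this regularization with the non-homogeneous heat kernel generated by $\cL_B(\bar\sigma(t,z,T))$, exactly as in the adaptation worked out in \cite{matcycles}.
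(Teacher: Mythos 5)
Your overall scheme is the one the paper uses: augment the first-order approximation with higher-order correctors, derive a PDE for the residual whose source splits into the surviving formal terms plus the tail of the $\psi_i^{\eps,\delta}$ expansion (the paper's operator $\cL^{\eps,\delta}_R$, controlled by \eqref{eq:residual_PDE}), represent the residual by Feynman--Kac, control growth in $y$ via the polynomial bound on $\phi$ from hypothesis (iii), and remove the smoothness assumption on $\varphi$ by the regularization argument of \cite{fouque_call}. The one step that fails as written is your choice of augmented approximation: you take
$Q^{\eps,\delta}=P_0+\sqrt{\eps}P_{1,0}+\sqrt{\delta}P_{0,1}+\eps P_{2,0}+\sqrt{\eps\delta}P_{1,1}+\eps^{3/2}P_{3,0}$,
omitting the corrector $\eps\sqrt{\delta}\,P_{2,1}$ that the paper includes in \eqref{eq:Phat}. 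This term is not optional. At order $\eps^{0}\delta^{1/2}$ the formal hierarchy is Equation \eqref{eq:edp_delta_3}, $\cL_0P_{2,1}+\cL_1P_{1,1}+\cM_3P_{1,0}+\cL_2P_{0,1}+\cM_1P_0=0$; with $P_{1,0},P_{1,1}$ chosen independent of $y$ this reduces to $\cL_0P_{2,1}=-(\cL_2P_{0,1}+\cM_1P_0)$. The solvability condition only guarantees $\langle\cL_2P_{0,1}+\cM_1P_0\rangle=0$; the $y$-fluctuation (driven by $\eta^2(y,z)-\bar\eta^2(z)$ and $\eta(y,z)-\langle\eta(\cdot,z)\rangle$) is nonzero and is cancelled precisely by the contribution $\tfrac{1}{\eps}\cL_0\bigl(\eps\sqrt{\delta}P_{2,1}\bigr)=\sqrt{\delta}\,\cL_0P_{2,1}$. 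Without $P_{2,1}$ your source $R^{\eps,\delta}$ retains a term of order $\sqrt{\delta}$, and the Feynman--Kac estimate then yields only $O(\eps+\sqrt{\delta})$ rather than the claimed $O(\eps+\delta)$. The fix is exactly the paper's: set $\widehat{P}^{\eps,\delta}=\widetilde{P}^{\eps,\delta}+\eps(P_{2,0}+\sqrt{\eps}P_{3,0})+\sqrt{\delta}(\sqrt{\eps}P_{1,1}+\eps P_{2,1})$, after which the source is genuinely $\eps R_1^{\eps}+\sqrt{\eps\delta}R_2^{\eps}+\delta R_3^{\eps}+\cL_R^{\eps,\delta}\widehat{P}^{\eps,\delta}$ and the terminal data carries prefactors $\eps$ and $\sqrt{\eps\delta}$, both $O(\eps+\delta)$.

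A minor additional remark: you cannot in general arrange $P_{2,0}(T_0,\cdot)=0$ by tuning only the $y$-independent free function in \eqref{eq:p20}, since the $y$-dependent part $-\tfrac12 e^{-2\kappa(T-T_0)}\phi(y,z)D_2\varphi$ does not vanish at $T_0$; but this does not matter, because the prefactor $\eps$ already makes the terminal contribution $O(\eps)$, which is how the paper handles it in \eqref{eq:pde_R_eps_delta}. Your placement of the payoff regularization at the end (smooth estimate with $\mu$-dependent constants, then optimization of $\mu$ in terms of $\eps+\delta$) is a legitimate unpacking of what the paper invokes in one sentence at the start of its proof.
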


\begin{proof}

The proof is provided in Appendix \ref{sec:proof}.

\end{proof}

Observe that in the heuristic derivation of the approximation given by Equations \eqref{eq:Pbar_B} and \eqref{eq:Pbar_B_epsdelta} we did not use any additional smoothness assumption as it is assumed in \cite{jaimungal_futures} for the derivation of their approximation. Hence our theorem covers the case of \textit{call options}. Being able to apply this approximation to \textit{call options} is essential to the next section.

%
%

\section{Calibration}\label{sec:calibration}

In this section we will outline a procedure to calibrate the \textit{group market parameters} $(\kappa, \bar{\eta}(z), V_0^{\delta}(z), V_3^{\eps}(z))$ to available prices of call options on $F_{t,T}$. As one may conclude from \cite[Chapters 6 and 7]{fouque_stochastic_vol_new}, or from an application of Functional It\^o Calculus (\cite{fito_dupire}) to the perturbation analysis, the values of the group market parameters are the only parameters needed to price path-dependent or American options to the same order of accuracy. Therefore, once the group market parameters are calibrated to vanilla options, the \textit{same} parameters are used to price exotic derivatives. This is one of the most important characteristics of the perturbation theory. Additionally, as we will conclude from what follows, the clean first-order approximation derived in this paper together with the fact we are considering the future price as the variable allow the derivation of the simple calibration procedure of the model to Black implied volatilities. This was not achieved in previous works on this subject, see Section \ref{sec:comparison}.

\subsection{Approximate Call Prices on Future Contracts and Implied Volatilities}\label{sec:approximate}

We assume without loss of generality $t=0$ and consider a European call option on $F_{0,T}$ with maturity $T_0 \leq T$ and strike $K$, i.e. the payoff function is given by $\varphi(x) = (x - K)^+$. As we are interested in the calibration of the market group parameters to call prices at the fixed time $t=0$, we will drop the variables $(t,x)$ in the formulas and write the variables $(T_0,K)$ instead. We will also drop the variable $z$ since it should be understood as just a parameter. The Black formula for a $(T_0,K)$-call option is defined by
\begin{align}\label{eq:black_formula}
C_B(T_0,K, \sigma) = e^{-rT_0}(F_{0,T}\Phi(d_1(\sigma)) - K\Phi(d_2(\sigma))),
\end{align}
where
$$d_{1,2}(\sigma) = \ds \frac{\log(F_{0,T}/K) \pm \frac{\sigma^2}{2}T_0}{\sigma\sqrt{T_0}}.$$
Let us also denote
$$\bar{d}_{1,2} = d_{1,2}(\bar{\sigma}_{0,T_0}),$$
where $\bar{\sigma}_{0,T_0}$ is the time-averaged volatility defined in \eqref{eq:sigma_bar_bar}, and notice that Equation \eqref{eq:Pbar_B} satisfies
\begin{align}\label{eq:Pbar_CB}
\bar{P}(0,F_{0,T},z,T) = C_B(T_0,K, \bar{\sigma}_{0,T_0}).
\end{align}
The following relations between Greeks of the Black price are well-known and they will be essential in what follows:
$$\frac{\partial C_B}{\partial \sigma}(T_0,K,\sigma) = T_0 \sigma D_2C_B(T_0,K,\sigma),$$
and
\begin{align}
D_1 \frac{\partial C_B}{\partial \sigma}(T_0,K,\sigma) &= \left(1 - \frac{d_1}{\sigma\sqrt{T_0}}\right) \frac{\partial C_B}{\partial \sigma}(T_0,K,\sigma) \label{eq:D1Vega} \\
&= \left(\frac{1}{2} + \frac{\log(K/x)}{\sigma^2T_0}\right) \frac{\partial C_B}{\partial \sigma}(T_0,K,\sigma), \nonumber
\end{align}
where the operator $D_k$ is defined in \eqref{eq:D_k}. Using the Greeks relations stated above and \eqref{eq:Pbar_CB}, we are able to rewrite \eqref{eq:Pbar_B_epsdelta} as
\begin{align*}
\bar{P}^{\eps,\delta} &= \left(\frac{\lambda_3(T_0,T,\kappa)}{\bar{\sigma}_{0,T_0}}V_3^{\eps}(z) + \frac{\lambda_3(T_0,T,\kappa)}{\bar{\sigma}_{0,T_0}}V_3^{\eps}(z)\left(\frac{1}{2} + \frac{\log(K/F_{0,T})}{\bar{\sigma}_{0,T_0}^2T_0}\right) \right.\\
&\left. + \frac{\lambda_0(T_0,T,\kappa)}{\bar{\sigma}_{0,T_0}}V_0^{\delta}(z) + \frac{\lambda_1(T_0,T,\kappa)}{\bar{\sigma}_{0,T_0}}V_0^{\delta}(z) \left(\frac{1}{2} + \frac{\log(K/F_{0,T})}{\bar{\sigma}_{0,T_0}^2T_0}\right)\right)\frac{\partial \bar{P}}{\partial \sigma}.
\end{align*}
Now, we convert the price $P^{\eps,\delta}$ to a Black implied volatility $I$:
$$C_B(T_0,K,I(T_0,K,T)) = P^{\eps,\delta} = \bar{P} + \bar{P}^{\eps,\delta} + \cdots.$$
\begin{obs*}
Since we do not have a spot price readily available for trade in our model we work with futures. Thus, differently from what is done in the equity case, we consider the Black implied volatility instead of the Black--Scholes implied volatility.
\end{obs*}
Then, expand $I(T_0,K,T)$ around $\bar{\sigma}_{0,T_0}$:
$$I(T_0,K,T) - \bar{\sigma}_{0,T_0} = \sqrt{\eps}I_{1,0}(T_0,K,T) + \sqrt{\delta}I_{0,1}(T_0,K,T) + \cdots.$$
Hence, matching both expansions gives us
\begin{align*}
\sqrt{\eps}I_{1,0}(T_0,K,T) &= \frac{3}{2}\frac{\lambda_3(T_0,T,\kappa)}{\lambda_{\sigma}(T_0,T,\kappa)}\frac{V_3^{\eps}(z)}{\bar{\eta}(z)} + \frac{\lambda_3(T_0,T,\kappa)}{\lambda_{\sigma}^3(T_0,T,\kappa)}\frac{V_3^{\eps}(z)}{\bar{\eta}^3(z)}\frac{\log(K/F_{0,T})}{T_0}, \\ \\
\sqrt{\delta}I_{1,0}(T_0,K,T) &= \left(\frac{\lambda_0(T_0,T,\kappa)}{\lambda_{\sigma}(T_0,T,\kappa)} + \frac{1}{2}\frac{\lambda_1(T_0,T,\kappa)}{\lambda_{\sigma}(T_0,T,\kappa)} \right) \frac{V_0^{\delta}(z)}{\bar{\eta}(z)} \\
&\hskip .5cm + \frac{\lambda_1(T_0,T,\kappa)}{\lambda_{\sigma}^3(T_0,T,\kappa)}\frac{V_0^{\delta}(z)}{\bar{\eta}^3(z)}\frac{\log(K/F_{0,T})}{T_0}.
\end{align*}
So, in terms of the reduced variable $\LMMR$, the log-moneyness to maturity ratio defined by
$$\LMMR = \frac{\log(K/F_{0,T})}{T_0},$$
the first-order approximation of the implied volatility $I(T_0,K,T)$ can be written as
\begin{align}
\hskip .55cm I(T_0,K,T) \approx\,\,& \bar{\eta}(z)\bar{b}(T_0,T,\kappa) +\frac{V_3^{\eps}(z)}{\bar{\eta}(z)}b^{\eps}(T_0,T,\kappa) + \frac{V_0^{\delta}(z)}{\bar{\eta}(z)} b^{\delta}(T_0,T,\kappa) \label{eq:approx_I}\\
&+ \left(\frac{V_3^{\eps}(z)}{\bar{\eta}^3(z)}a^{\eps}(T_0,T,\kappa) + \frac{V_0^{\delta}(z)}{\bar{\eta}^3(z)}a^{\delta}(T_0,T,\kappa)\right)\LMMR, \nonumber
\end{align}
where
\begin{align}
\bar{b}(T_0,T,\kappa) &= \lambda_{\sigma}(T_0,T,\kappa), \label{eq:b_bar} \\
b^{\eps}(T_0,T,\kappa) &= \frac{3}{2}\frac{\lambda_3(T_0,T,\kappa)}{\lambda_{\sigma}(T_0,T,\kappa)}, \label{eq:b_eps} \\
b^{\delta}(T_0,T,\kappa) &= \frac{\lambda_0(T_0,T,\kappa)}{\lambda_{\sigma}(T_0,T,\kappa)} + \frac{1}{2}\frac{\lambda_1(T_0,T,\kappa)}{\lambda_{\sigma}(T_0,T,\kappa)} , \label{eq:b_delta} \\
a^{\eps}(T_0,T,\kappa) &= \frac{\lambda_3(T_0,T,\kappa)}{\lambda_{\sigma}^3(T_0,T,\kappa)}, \label{eq:a_eps} \\
a^{\delta}(T_0,T,\kappa) &= \frac{\lambda_1(T_0,T,\kappa)}{\lambda_{\sigma}^3(T_0,T,\kappa)}. \label{eq:a_delta}
\end{align}
Therefore, the model predicts at first-order accuracy and for fixed maturities $T_0$ and $T$ that the implied volatility is affine in the $\LMMR$ variable.

\begin{obs*}
The importance of considering the derivative price as a function of the future price then unfolds into Equation \eqref{eq:D1Vega}. Indeed, if one considers $C_B$ as a function of the spot value $V$, as it is done in \cite{jaimungal_futures}, such formula would not be true. Hence, the calculations performed below could not be carried out following the standard steps presented in \cite{fouque_stochastic_vol_new}. In addition, the $\LMMR$ variable here must be defined with respect with the future price, providing again one more reason that the right variable to be considered is the future price instead of the spot price.
\end{obs*}

\subsection{Calibration Procedure}\label{sec:calibration_procedure}

Suppose that at the present time $t=0$, there is available the finite set of Black implied volatilities $\{I(T_{0ij}, K_{ijl}, T_i)\}$, which we understand in the following way: for each $i$ (i.e. for each future price $F_{0,T_i}$) there are available call option prices with maturities $T_{0ij}$ and, for each of these maturities, strikes $K_{ijl}$.

Since the data is more abundant in the $K$ direction, we will first linearly regress the implied volatilities against the variable $\LMMR_{ijl}$,
$$\LMMR_{ijl} = \frac{\log(K_{ijl}/F_{0,T_i})}{T_{0ij}},$$
for fixed $i$ and $j$. More precisely, we will use the least-squares criterion to perform this regression, namely
\begin{align}\label{eq:a_b_hat}
(\hat{a}_{ij}, \hat{b}_{ij}) = \ds \underset{(a_{ij},b_{ij})}{\operatorname{argmin}} \sum_l \left( I(T_{0ij}, K_{ijl}, T_i) - \left( a_{ij} \LMMR_{ijl} + b_{ij} \right) \right)^2.
\end{align}
Now, using Equation \eqref{eq:approx_I}, we first regress the estimate $\hat{a}$ against $a^{\eps}$ and $a^{\delta}$:
\begin{align}\label{eq:a_b_kappa_hat}
\hskip 1cm (\hat{a}_0, \hat{a}_1, \hat{\kappa})\hspace{-2pt} = \hspace{-2pt} \underset{(a_0,a_1,\kappa)}{\operatorname{argmin}} \hspace{-2pt} \sum_{i,j} \hspace{-2pt} \left( \hat{a}_{ij}\hspace{-2pt} - \hspace{-2pt}\left( a_0 a^{\eps}(T_{0ij},T_i,\kappa) \hspace{-2pt}+ \hspace{-2pt}a_1 a^{\delta}(T_{0ij},T_i,\kappa)\right) \right)^2
\end{align}
and then knowing $(\hat{a}_0, \hat{a}_1, \hat{\kappa})$ we regress $\hat{b}$ against $\bar{b}$, $b^{\eps}$ and $b^{\delta}$:
\begin{align}
\hat{b}_0 &= \underset{b_0}{\operatorname{argmin}} \sum_{i,j} \left( \hat{b}_{ij} - \left( b_0\bar{b}(T_{0ij},T_i,\hat{\kappa}) +  b_0^2\left( \hat{a}_0 b^{\eps}(T_{0ij},T_i,\hat{\kappa}) + \hat{a}_1 b^{\delta}(T_{0ij},T_i,\hat{\kappa})\right)\right) \right)^2. \label{eq:b_hat}
\end{align}

Therefore, we find the following estimates for the market group parameters:
\begin{align}
\widehat{\bar{\eta}(z)} &= \hat{b}_0, \label{eq:eta_bar_hat}\\
\widehat{V_3^{\eps}(z)} &= \hat{a}_0 \hat{b}_0^3, \label{eq:V3eps_hat} \\
\widehat{V_0^{\delta}(z)} &= \hat{a}_1 \hat{b}_0^3, \label{eq:V0delta_hat}
\end{align}
and $\hat{\kappa}$ is given by Equation \eqref{eq:a_b_kappa_hat}. In order to perform the above minimizations, the initial guesses are of utmost importance. Since we expect terms of order $\sqrt{\eps}$ to be small, we set the initial guesses of $a_0$ and $a_1$ to be 0. On the other hand, we can construct initial guesses of $\kappa$ and $b_0$ by estimating $\kappa$ and $\bar{\eta}(z)$ from historical data of $F_{t,T}$.

\subsection{Calibration Example}

In this section, we will exemplify the calibration procedure described in Section \ref{sec:calibration_procedure}. The goal of this section is merely the illustration of the calibration procedure.

It is important to notice that the calibration procedure requires only simple regressions. This is a huge improvement from the formulas derived in \cite{jaimungal_futures} that depend upon a more computationally demanding first-order approximation.

The data considered were Black implied volatilities of call and put options on the crude-oil future contracts on October 16th, 2013. On this day, 533 implied volatilities are available. This data is organized as follows: for each future contract (i.e. for each maturity $T_i$), there is one option maturity $T_{0ij}$ and 41 strikes $K_{ijl}$. By contractual specifications, the option maturity is roughly one month before the maturity of its underlying future contract (i.e. $T_i \approx T_{0ij} + 30$). The future prices are shown in Figure \ref{fig:futures} and since there is no clear seasonality component, we set $s \equiv 0$ in \eqref{eq:sde_risk_neutral}. The calibration of our model to all the available data is shown in Figure \ref{fig:iv_all} and Table \ref{tab:Calibration_all}. Since the implied volatilities curves present a noticeable smile for short maturities (30 and 60 days), we also calibrate our model to implied volatilities with maturity greater than 90 days. This is shown in Figure \ref{fig:iv} and Table \ref{tab:Calibration_90}. This is also consistent to our model, since the model requires enough time to maturity so the fast mean-reverting process $Y^{\eps}$ has enough time to oscillate around its ergodic mean. Furthermore, if one wants to capture the convexity of the implied volatility smile, one would have to use the second-order approximation as it is done for the Equity markets in \cite{secondorder}. In this numerical example, several reasonable initial guesses of $\kappa$ and $b_0$ where tested, all of them leading to the same calibrated parameters. Hence, the estimation of $\kappa$ and $b_0$ using historical data of $F_{t,T}$ was not necessary.

We show in Figures \ref{fig:iv} and \ref{fig:iv_all} the implied volatility fit for different maturities, where the solid line is the model implied volatility and the circles are the implied volatilities observed in the market. The shortest maturities implied volatility curves are on the leftmost thread and the maturity increases clockwise. The calibrated group market parameters are given in Tables \ref{tab:Calibration_90} and \ref{tab:Calibration_all}. It is important to notice that $V_3^{\eps}(z)$ and $V_0^{\delta}(z)$ are indeed small and hence these parameters are compatible with our model.

\begin{table}[h!]
\begin{center}
 \begin{tabular}{c c}
 \hline Parameter & Value  \\
 \hline
 $\hat{\kappa}$ & 0.1385 \\
 $\widehat{\bar{\eta}(z)}$ & 0.21967  \\
 $\widehat{V_3^{\eps}(z)}$ & -0.00017637 \\
 $\widehat{V_0^{\delta}(z)}$ &  -0.012656\\
 \hline
 \end{tabular}
\end{center}
\caption{Calibrated parameters using options with maturity greater than 90 days.} \label{tab:Calibration_90}
\begin{center}
 \begin{tabular}{c c}
 \hline Parameter & Value  \\
 \hline
 $\hat{\kappa}$ & 0.30853 \\
 $\widehat{\bar{\eta}(z)}$ & 0.23773 \\
 $\widehat{V_3^{\eps}(z)}$ & -0.00011823  \\
 $\widehat{V_0^{\delta}(z)}$ &  -0.007633\\
 \hline
 \end{tabular}
\end{center}
\caption{Calibrated Parameters using all data available} \label{tab:Calibration_all}
\end{table}
\begin{figure}[h!]
\begin{center}
\scalebox{0.6}{\includegraphics{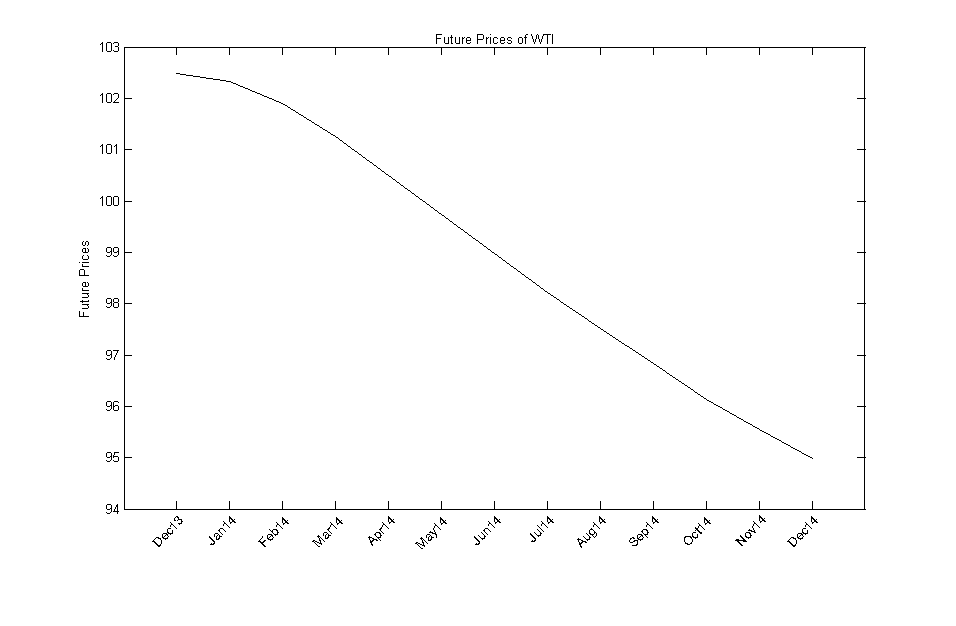}} \vspace{-25pt} \caption{Future prices on October 16th, 2013.} \label{fig:futures}
\end{center}
\end{figure}

\begin{figure}[h!]
\begin{center}
\scalebox{0.6}{\includegraphics{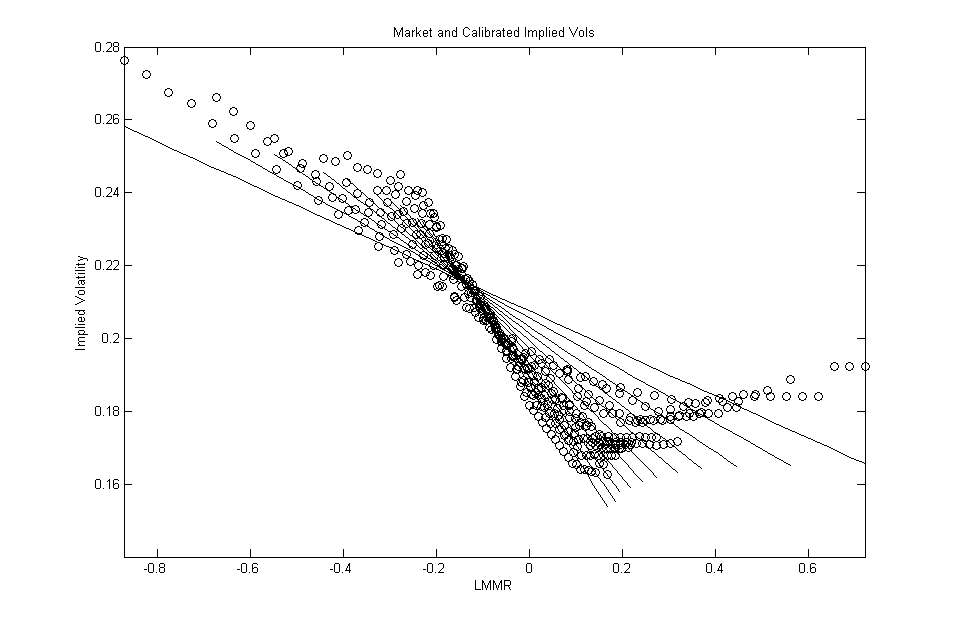}} \vspace{-25pt}  \caption{Market (circles) and calibrated (solid lines) implied volatilities for options on crude-oil futures with maturity greater than 90 days.} \label{fig:iv}  \vspace{25pt}
\scalebox{0.6}{\includegraphics{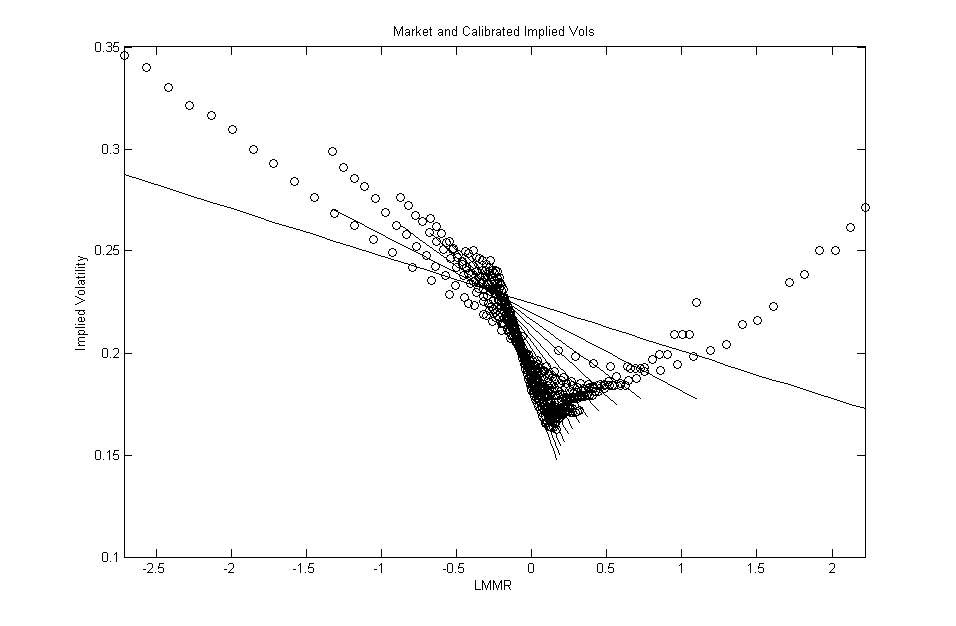}}\vspace{-10pt}   \caption{Market (circles) and calibrated (solid lines) implied volatilities for options on crude-oil futures using all data available.} \label{fig:iv_all}
\end{center}
\end{figure}

\clearpage

\section{Comparison to Previous Work}\label{sec:comparison}

Multiscale stochastic volatility models were considered in the context of mean reverting asset price in the papers \cite{jaimungal_futures} and \cite{ multiscale_aymptotic_mean_rever}. Both derived the first-order approximation of options on the spot price. However, only \cite{jaimungal_futures} handled options on futures. Therefore, we will focus this section on this work.

In \cite{jaimungal_futures}, the authors have applied the method first developed in \cite{multiscale_option_interest_rate} to approximate the price of options on commodity futures. This method consists of writing the payoff function as its first-order Taylor polynomial around the zero-order term of the future price expansion \eqref{eq:h_0} and then applying the singular-perturbation arguments introduced by Fouque \textit{et al.} to finally derive the approximation. Another aspect of their method that is fundamentally different of ours is that they consider the variable of interest to be the spot price of the commodity, denoted here by $V_t$, while we consider the future prices of $V$, denoted here by $F_{t,T}$. As we will see below, this is one of the reasons no simple calibration procedure can be designed using their first-order approximation.

The authors present two classes of models: one-factor and two-factor models. Both models share one common aspect: the commodity presents fast-mean reversion stochastic volatility. Our model \eqref{eq:sde_risk_neutral} is thus an extension of their one-factor model, where we have added the slow time-scale to the stochastic volatility dynamics. However, compared to \cite{jaimungal_futures}, the main contributions of our work are:
\begin{itemize}

\item[(a)] we do not rely on the Taylor expansion of the payoff function to derive the first-order approximation of options prices and therefore no additional smoothness assumption is necessary. The inversion argument presented in Section \ref{sec:perturbation} allows us to overcome this smoothness restriction to the payoff function;

\item[(b)] our first-order correction (see Summary \ref{sec:summary} for instance) is a substantial betterment of theirs, because it involves only Greeks of the zero-order term. Their first-order correction presents a complicated term that involves the expectation:
    $$\bE_{\bQ}[h_0(T_0, \overline{U}_{T_0})\varphi'(h_0(T_0, \overline{U}_{T_0}))\ | \ \overline{U}_t = u],$$
    where we are using the notation established in the previous sections and $\overline{U}$ is the process $U$ of Equation \eqref{eq:sde_risk_neutral} with $\eta(y,z) = \bar{\eta}(z)$;

\item[(c)] we present a simple calibration procedure of the market group parameters. The simple expression of our first-order correction is one of the reasons such calibration procedure is possible. However, the essential aspect of our method that is the cornerstone of the procedure is that we consider the future price $F_{t,T}$ as the variable, and not spot price $V_t$. So, since the future price is a martingale (as opposed to the $V$), better formulas for the Greeks of the 0-order term are available. Mainly, Equation \eqref{eq:D1Vega} holds true.

\end{itemize}

\section{Concluding Remarks and Future Directions}\label{sec:conclusion}

We have presented a general method to derive the first-order approximation of compound derivatives and developed it thoroughly in the case of derivatives on future contracts. Although the method may seem to be involved, it does not require any additional hypothesis on the regularity of the payoff function other than the ones inherent to the perturbation theory. Moreover, we presented a calibration procedure associated with the method, for which we derive formulas for the market group parameters. A practical numerical example of the calibration procedure is given using data of Black implied volatilities of options on crude-oil futures.

One direction for further study would be to connect the model and asymptotic expansions proposed in this work with the celebrated Schwartz-Smith \cite{schwartz-smith} and
Gibson-Schwartz \cite{gibson1990} models for commodity prices. In particular, an important question would be the computation of the risk premium as in \cite{gibson1990}.


\begin{appendix}

\section{PDE Expansion}\label{sec:appendix_pde_exp}

Formally write, for $k=1,2,3$,
$$\psi^{\eps,\delta}_k(t,x,y,z,T) = \sum_{i,j \geq 0} (\sqrt{\eps})^i(\sqrt{\delta})^j \psi_{k,i,j}(t,x,y,z,T).$$
In what follows we will compute only the terms of the above expansion that are necessary for the computation of the first-order approximation of derivatives on $F_{t,T}$.

\subsection{Expanding $\psi_1^{\eps,\delta}$}

By the chain rule
$$\psi^{\eps,\delta}_1(t,x,y,z,T) = \frac{1}{\ds\frac{\partial H}{\partial x}^{\eps,\delta}(t,x,y,z,T)},$$
and then one can easily see that
$$\psi_{1,0,0}(t,x,y,z,T) = \frac{1}{\ds\frac{\partial H_{0}}{\partial x}(t,x,z,T)} = \frac{\partial h_0}{\partial u}(t,H_0(t,x,z,T),z,T).$$
Since
$$\frac{\partial h_0}{\partial u}(t,u,z,T) = e^{-\kappa(T-t)}h_0(t,u,z,T),$$
we have the following formula for $\psi_{1,0,0}$:
\begin{align}\label{eq:psi100}
\psi_{1,0,0}(t,x,T) = e^{-\kappa(T-t)}h_0(t,H_0(t,x,z,T),z,T) = e^{-\kappa(T-t)}x.
\end{align}
Moreover, by Lemma \ref{lemma:inv},
\begin{align*}
H_{1,0}(t,x,z,T) &= -\frac{h_{1,0}(t,H_0(t,x,z,T),z,T)}{\ds\frac{\partial h_0}{\partial u}(t,H_0(t,x,z,T),z,T)} \\
&= -\frac{g(t,T)V_3(z)e^{-3\kappa(T-t)}x}{e^{-\kappa(T-t)}x} = -g(t,T)V_3(z)e^{-2\kappa(T-t)},
\end{align*}
which is independent of $x$, and therefore,
\begin{align}\label{eq:psi110}
\psi_{1,1,0}(t,x,z,T) = - \frac{1}{\ds\left(\frac{\partial H_0}{\partial x}(t,x,z,T)\right)^2} \frac{\partial H_{1,0}}{\partial x}(t,x,z,T) = 0.
\end{align}
We also have
\begin{align}\label{eq:psi101}
\psi_{1,0,1}(t,x,z,T) = 0.
\end{align}

\subsection{Expanding $\psi_2^{\eps,\delta}$}

Recall that the first four terms of the expansion of $h^{\eps,\delta}$ do not depend on $y$. Thus
\begin{align}\label{eq:psi20}
\psi_{2,0,0} = \psi_{2,0,1} = \psi_{2,1,0} = \psi_{2,1,1} = 0.
\end{align}
Furthermore, again by the chain rule, we have
$$\psi_2^{\eps,\delta}(t,x,y,z,T) = -\psi_1^{\eps,\delta}(t,x,y,z,T) \frac{\partial H}{\partial y}^{\eps,\delta}(t,x,y,z,T).$$
From this, we get
\begin{align*}
\psi_{2,2,0}(t,x,y,z,T) &= -\psi_{1,0,0}(t,x,T) \frac{\partial H_{2,0}}{\partial y}(t,x,y,z,T) \\
&= -e^{-\kappa(T-t)}x \frac{\partial H_{2,0}}{\partial y}(t,x,y,z,T).
\end{align*}
In order to compute $H_{2,0}$ we need to go further in the expansions of $h^{\eps,\delta}$ and $H^{\eps,\delta}$. One can compute the term of order $(2,0)$ in $H^{\eps,\delta}$ and then conclude
\begin{align*}
H_{2,0}(t,x,y,z,T) &= -\frac{h_{2,0}(t,H_0(t,x,z,T),y,z,T)}{\frac{\partial h_0}{\partial u}(t,H_0(t,x,z,T),z,T)} \\
&- \frac{1}{2} \frac{\frac{\partial^2 h_{0}}{\partial^2 u}(t,H_0(t,x,z,T),z,T)}{\frac{\partial h_0}{\partial u}(t,H_0(t,x,z,T),z,T)}H_{1,0}^2(t,x,z,T) \\
& - \frac{\frac{\partial h_{1,0}}{\partial u}(t,H_0(t,x,z,T),z,T)}{\frac{\partial h_0}{\partial u}(t,H_0(t,x,z,T),z,T)}H_{1,0}(t,x,z,T),
\end{align*}
which implies
$$\frac{\partial H_{2,0}}{\partial y}(t,x,y,z,T) = -\frac{\frac{\partial h_{2,0}}{\partial y}(t,H_0(t,x,z,T),y,z,T)}{\frac{\partial h_0}{\partial u}(t,H_0(t,x,z,T),z,T)}.$$
From \eqref{eq:h_20}, we know that
\begin{align*}
\frac{\partial h_{2,0}}{\partial y}(t,u,y,z,T) &= -\frac{1}{2} \frac{\partial \phi}{\partial y}(y,z) \frac{\partial^2 h_0}{\partial u^2}(t,u,z,T) \\
&= -\frac{1}{2}e^{-2\kappa(T-t)} \frac{\partial \phi}{\partial y}(y,z) h_0(t,u,z,T),
\end{align*}
and hence
$$\frac{\partial H_{2,0}}{\partial y}(t,x,y,z,T) = \frac{\frac{1}{2}e^{-2\kappa(T-t)} \frac{\partial \phi}{\partial y}(y,z)x}{e^{-\kappa(T-t)}x} = \frac{1}{2}e^{-\kappa(T-t)}\frac{\partial \phi}{\partial y}(y,z).$$
Finally,  this gives the formula
\begin{align}\label{eq:psi220}
\psi_{2,2,0}(t,x,y,z,T) = -\frac{1}{2}e^{-2\kappa(T-t)} \frac{\partial \phi}{\partial y}(y,z) x.
\end{align}

\subsection{Expanding $\psi_3^{\eps,\delta}$}

By the chain rule, we have
$$\psi_3^{\eps,\delta}(t,x,y,z,T) = -\psi_1^{\eps,\delta}(t,x,y,z,T) \frac{\partial H}{\partial z}^{\eps,\delta}(t,x,y,z,T),$$
and then
\begin{align}
\psi_{3,0,0}(t,x,z) &= -\psi_{1,0,0}(t,x,T) \frac{\partial H_{0}}{\partial z}(t,x,y,z,T) \label{eq:psi300}\\
&= \frac{\partial h_{0}}{\partial z}(t,H_0(t,x,z,T),z,T) \nonumber \\
&= \frac{1}{2\kappa} \bar{\eta}(z) \bar{\eta}'(z)(1 - e^{-2\kappa (T-t)}) x. \nonumber
\end{align}

\subsection{An Expansion of the Pricing PDE}

Define
$$\Psi_{n,p}^{k,m}(t,x,y,z,T) = \sum_{i=0}^n \sum_{j=0}^p\psi_{k,i,j}(t,x,y,z,T)\psi_{m,n-i,p-j}(t,x,y,z,T),$$
which is the $(n,p)$-order coefficient of the formal power series of $\psi^{\eps,\delta}_k\psi^{\eps,\delta}_m$:
\begin{align*}
&\psi^{\eps,\delta}_k(t,x,y,z,T)\psi^{\eps,\delta}_m(t,x,y,z,T) \\
&= \sum_{i,j,l,r \geq 0} (\sqrt{\eps})^{l+i}(\sqrt{\delta})^{j+r} \psi_{k,i,j}(t,x,y,z,T)\psi_{m,l,r}(t,x,y,z,T) \\
&=\sum_{n,p \geq 0} (\sqrt{\eps})^{n}(\sqrt{\delta})^{p} \left( \sum_{i=0}^n \sum_{j=0}^p\psi_{k,i,j}(t,x,y,z,T)\psi_{m,n-i,p-j}(t,x,y,z,T) \right).
\end{align*}

Therefore, we have the following formal expansion for the operator $\cL^{\eps,\delta}$ where we drop the variables $(t,x,y,z,T)$ for simplicity,

\begin{align*}
\cL^{\eps,\delta} = &\frac{1}{\eps} \left( \cL_0 + \frac{1}{2}\Psi^{2,2}_0 \beta^2(y) \frac{\partial^2}{\partial x^2} + \psi_{2,0,0} \beta^2(y) \frac{\partial^2}{\partial x \partial y} \right) \\
+& \frac{1}{\sqrt{\eps}}\left(\rhor_{1} \Psi^{1,2}_0 \eta(y,z)\beta(y) \frac{\partial^2}{\partial x^2} + \rhor_{1} \psi_{1,0,0} \eta(y,z)\beta(y) \frac{\partial^2}{\partial x \partial y} \right.\\
+& \left. \frac{1}{2}\Psi^{2,2}_{1,0} \beta^2(y) \frac{\partial^2}{\partial x^2} + \psi_{2,1,0} \beta^2(y) \frac{\partial^2}{\partial x \partial y}\right) \\
+& \left( \frac{\partial}{\partial t} + \frac{1}{2}\Psi^{1,1}_0 \eta^2(y,z) \frac{\partial^2}{\partial x^2} - r \cdot + \rhor_{1} \Psi^{1,2}_{1,0} \eta(y,z)\beta(y) \frac{\partial^2}{\partial x^2} \right.\\
+& \left. \rhor_{1} \psi_{1,1,0} \eta(y,z)\beta(y) \frac{\partial^2}{\partial x \partial y} + \frac{1}{2}\Psi^{2,2}_{2,0} \beta^2(y) \frac{\partial^2}{\partial x^2} + \psi_{2,2,0} \beta^2(y) \frac{\partial^2}{\partial x \partial y}\right) \\
+& \sqrt{\eps} \left(\frac{1}{2} \Psi^{2,2}_{3,0} \beta^2(y) \frac{\partial^2}{\partial x^2} + \psi_{2,3,0} \beta^2(y) \frac{\partial^2}{\partial x \partial y} + \rhor_{1} \Psi_{2,0}^{1,2} \eta(y,z) \beta(y) \frac{\partial^2}{\partial x^2}\right. \\
+& \left.\rhor_{1} \psi_{1,2,0} \eta(y,z) \beta(y) \frac{\partial^2}{\partial x \partial y} + \frac{1}{2} \Psi^{1,1}_{1,0} \beta^2(y) \frac{\partial^2}{\partial x^2} \right) \\
+& \sqrt{\delta}\left( \rhor_{2} \Psi^{1,3}_0 \eta(y,z)g(z) \frac{\partial^2}{\partial x^2} + \rhor_{2} \psi_{1,0,0} \eta(y,z)g(z) \frac{\partial^2}{\partial x \partial z} \right. \\
+& \frac{1}{2}\Psi^{2,2}_{2,1} \beta^2(y) \frac{\partial^2}{\partial x^2} + \psi_{2,2,1} \beta^2(y) \frac{\partial^2}{\partial x \partial y} \\
+& \left. \rhor_{2} \Psi^{1,2}_{1,1} \eta(y,z)\beta(y) \frac{\partial^2}{\partial x^2} + \rhor_{2} \psi_{1,1,1} \eta(y,z)\beta(y) \frac{\partial^2}{\partial x \partial y} + \frac{1}{2}\Psi^{1,1}_{0,1} \eta^2(y,z) \frac{\partial^2}{\partial x^2} \right) \\
+& \sqrt{\frac{\delta}{\eps}} \left(\rhor_{12} \Psi^{2,3}_0 \beta(y)g(z) \frac{\partial^2}{\partial x^2} + \rhor_{12} \psi_{3,0,0} \beta(y)g(z) \frac{\partial^2}{\partial x \partial y} \right.\\
+& \rhor_{12} \psi_{2,0,0} \beta(y)g(z) \frac{\partial^2}{\partial x \partial z} + \rhor_{12} \beta(y) g(z) \frac{\partial^2}{\partial y \partial z} + \frac{1}{2}\Psi^{2,2}_{1,1} \beta^2(y) \frac{\partial^2}{\partial x^2} \\
+& \left. \psi_{2,1,1} \beta^2(y) \frac{\partial^2}{\partial x \partial y} + \rhor_{1} \Psi^{1,2}_{0,1} \eta(y,z)\beta(y) \frac{\partial^2}{\partial x^2} + \rhor_{1} \psi_{1,0,1} \eta(y,z)\beta(y) \frac{\partial^2}{\partial x \partial y}\right). \\
+& \cL^{\eps,\delta}_R.
\end{align*}
Then we write, with obvious notation,
$$\cL^{\eps,\delta} = \frac{1}{\eps}\cD_{-2,0} + \frac{1}{\sqrt{\eps}}\cD_{-1,0} + \cD_{0,0} + \sqrt{\eps} \cD_{1,0} + \sqrt{\delta} \cD_{0,1} + \sqrt{\frac{\delta}{\eps}} \cD_{-1,1} + \cL^{\eps,\delta}_R.$$

In order to simplify the  operators $\cD_{i,j}$, note that $\Psi^{2,k}_0 = 0$, for $k=1,2,3$,
$$\Psi^{2,2}_{1,0} = \Psi^{2,2}_{0,1} = \Psi^{1,2}_{1,0} = \Psi^{2,2}_{2,0} = \Psi^{2,2}_{3,0} = \Psi_{2,1}^{2,2} = \Psi_{1,1}^{1,2} = 0,$$
and $\Psi^{1,1}_{1,0} = \Psi^{1,1}_{0,1} = 0$. Hence,
\begin{align*}
&\bullet\cD_{-2,0} = \cL_0 + \frac{1}{2}
\cancelto{\scriptstyle 0}{\Psi^{2,2}_0 } \beta^2(y) \frac{\partial^2}{\partial x^2} + \cancelto{\scriptstyle 0}{\psi_{2,0,0}^{} } \beta^2(y) \frac{\partial^2}{\partial x \partial y} = \cL_0 \\ \\
&\bullet\cD_{-1,0} = \rhor_{1} \cancelto{\scriptstyle 0}{\Psi^{1,2}_0 } \eta(y,z)\beta(y) \frac{\partial^2}{\partial x^2} + \rhor_{1} \psi_{1,0,0} \eta(y,z)\beta(y) \frac{\partial^2}{\partial x \partial y} \\
&+\frac{1}{2}\cancelto{\scriptstyle 0}{\Psi^{2,2}_{1,0} } \beta^2(y) \frac{\partial^2}{\partial x^2} + \cancelto{\scriptstyle 0}{\psi_{2,1,0}^{} } \beta^2(y) \frac{\partial^2}{\partial x \partial y} = \rhor_{1} \psi_{1,0,0} \eta(y,z)\beta(y) \frac{\partial^2}{\partial x \partial y}\\ \\
&\bullet\cD_{0,0} = \frac{\partial}{\partial t} + \frac{1}{2}\Psi^{1,1}_0 \eta^2(y,z) \frac{\partial^2}{\partial x^2} - r\cdot + \rhor_{1} \cancelto{\scriptstyle 0}{\Psi^{1,2}_{1,0} } \eta(y,z)\beta(y) \frac{\partial^2}{\partial x^2} \\
&+ \rhor_{1} \cancelto{\scriptstyle 0}{\psi_{1,1,0}^{} } \eta(y,z)\beta(y) \frac{\partial^2}{\partial x \partial y} +
+ \frac{1}{2}\cancelto{\scriptstyle 0}{\Psi^{2,2}_{2,0} } \beta^2(y) \frac{\partial^2}{\partial x^2} + \psi_{2,2,0} \beta^2(y) \frac{\partial^2}{\partial x \partial y} \\
&= \frac{\partial}{\partial t} + \frac{1}{2} \psi_{1,0,0}^2\eta^2(y,z) \frac{\partial^2}{\partial x^2} - r\cdot + \psi_{2,2,0} \beta^2(y) \frac{\partial^2}{\partial x \partial y} \\ \\
&\bullet\cD_{1,0} = \frac{1}{2} \cancelto{\scriptstyle 0}{\Psi^{2,2}_{3,0} } \beta^2(y) \frac{\partial^2}{\partial x^2} + \psi_{2,3,0} \beta^2(y) \frac{\partial^2}{\partial x \partial y} + \rhor_{1} \cancelto{\scriptstyle\psi_{1,0,0}\psi_{2,2,0}}{\Psi_{2,0}^{1,2} } \eta(y,z) \beta(y) \frac{\partial^2}{\partial x^2} \\
&+ \rhor_{1} \psi_{1,2,0} \eta(y,z) \beta(y) \frac{\partial^2}{\partial x \partial y} + \frac{1}{2} \cancelto{\scriptstyle 0}{\Psi^{1,1}_{1,0} } \beta^2(y) \frac{\partial^2}{\partial x^2} \\
&= \left(\psi_{2,3,0} \beta^2(y) + \rhor_{1} \psi_{1,2,0} \eta(y,z) \beta(y)\right) \frac{\partial^2}{\partial x \partial y}
+ \rhor_{1}\psi_{1,0,0}\psi_{2,2,0} \eta(y,z) \beta(y) \frac{\partial^2}{\partial x^2} \\ \\
&\bullet\cD_{0,1} = \rhor_{2} \Psi^{1,3}_0 \eta(y,z)g(z) \frac{\partial^2}{\partial x^2} + \rhor_{2} \psi_{1,0,0} \eta(y,z)g(z) \frac{\partial^2}{\partial x \partial z} + \frac{1}{2} \cancelto{\scriptstyle 0}{\Psi^{2,2}_{2,1} } \beta^2(y) \frac{\partial^2}{\partial x^2} \\
&+ \psi_{2,2,1} \beta^2(y) \frac{\partial^2}{\partial x \partial y} + \rhor_{2} \cancelto{\scriptstyle 0}{\Psi^{1,2}_{1,1} } \eta(y,z)\beta(y) \frac{\partial^2}{\partial x^2} + \rhor_{2} \psi_{1,1,1} \eta(y,z)\beta(y) \frac{\partial^2}{\partial x \partial y} \\
&+ \frac{1}{2} \cancelto{\scriptstyle 0}{\Psi^{1,1}_{0,1} }\eta^2(y,z) \frac{\partial^2}{\partial x^2} = (\rhor_{2} \psi_{1,0,0} \psi_{3,0,0} \eta(y,z)g(z))\frac{\partial^2}{\partial x^2} \\
&+\rhor_{2} \psi_{1,0,0} \eta(y,z)g(z) \frac{\partial^2}{\partial x \partial z} + (\psi_{2,2,1} \beta^2(y) + \rhor_{2} \psi_{1,1,1} \eta(y,z)\beta(y)) \frac{\partial^2}{\partial x \partial y} \\ \\
&\bullet\cD_{-1,1} = \rhor_{12} \cancelto{\scriptstyle 0}{\Psi^{2,3}_0 } \beta(y)g(z) \frac{\partial^2}{\partial x^2} + \rhor_{23} \psi_{3,0,0} \beta(y)g(z) \frac{\partial^2}{\partial x \partial y} \\
&+ \rhor_{12} \cancelto{\scriptstyle 0}{\psi_{2,0,0}^{} } \beta(y)g(z) \frac{\partial^2}{\partial x \partial z} + \rhor_{12} \beta(y) g(z) \frac{\partial^2}{\partial y \partial z} + \frac{1}{2}\cancelto{\scriptstyle 0}{\Psi^{2,2}_{1,1} } \beta^2(y) \frac{\partial^2}{\partial x^2} \\
&+ \cancelto{\scriptstyle 0}{\psi_{2,1,1}^{} } \beta^2(y) \frac{\partial^2}{\partial x \partial y} + \rhor_{1} \cancelto{\scriptstyle 0}{\Psi^{1,2}_{0,1} } \eta(y,z)\beta(y) \frac{\partial^2}{\partial x^2} \\
&+ \rhor_{1} \cancelto{\scriptstyle 0}{\psi_{1,0,1}^{} } \eta(y,z)\beta(y) \frac{\partial^2}{\partial x \partial y} = \rhor_{12} \psi_{3,0,0} \beta(y)g(z) \frac{\partial^2}{\partial x \partial y} + \rhor_{12} \beta(y) g(z) \frac{\partial^2}{\partial y \partial z}
\end{align*}

It is clear from the above choices that the coefficients of the operator $\cL^{\eps,\delta}_R$ are of order $(\eps+\delta)$. Hence, if $f$ is smooth and bounded with all derivatives bounded,
\begin{align}\label{eq:residual_PDE}
\cL^{\eps,\delta}_R f(t,x,y,z) = O(\eps+\delta).
\end{align}

\section{Proof of Theorem \ref{thm:accuracy}}\label{sec:proof}

By the regularization argument presented in \cite{fouque_call}, extended for the addition of the slow factor \cite{fouque_stochastic_vol_new}, we may assume that the payoff $\varphi$ is smooth and bounded with bounded derivatives. Additionally, in the reference mentioned above, the authors outlined an argument to improve the error bound from $O(\eps \log |\eps| + \delta)$ to $O(\eps + \delta)$.

Following the proof of Theorem 4.10 given in \cite{fouque_stochastic_vol_new} for the Equity case, we go further in the approximation of $P^{\eps,\delta}$ and define the higher-order approximation:
\begin{align}\label{eq:Phat}
\widehat{P}^{\eps,\delta} = \widetilde{P}^{\eps,\delta} + \eps(P_{2,0} + \sqrt{\eps}P_{3,0}) + \sqrt{\delta}(\sqrt{\eps}P_{1,1} + \eps P_{2,1}),
\end{align}
and moreover, we introduce
\begin{align}\label{eq:cL_hat}
\widehat{\cL}^{\eps,\delta} = \frac{1}{\eps} \cL_0 + \frac{1}{\sqrt{\eps}} \cL_1 + \cL_2 + \sqrt{\eps} \cL_3 + \sqrt{\delta} \cM_1 + \sqrt{\frac{\delta}{\eps}} \cM_3.
\end{align}
Necessary properties of the additional terms in the expansion \eqref{eq:Phat} can be derived in the same way as it is done in \cite{fouque_stochastic_vol_new} and thus we skip the details here. Next, we define the residual
\begin{align}
R^{\eps,\delta} = P^{\eps,\delta} - \widehat{P}^{\eps,\delta}
\end{align}
and use the pricing PDE \eqref{eq:pricing_PDE} to conclude that
\begin{align*}
\cL^{\eps,\delta} R^{\eps,\delta} &= \cL^{\eps,\delta}(P^{\eps,\delta} - \widehat{P}^{\eps,\delta}) = -\cL^{\eps,\delta}\widehat{P}^{\eps,\delta} \\
&= -( \widehat{\cL}^{\eps,\delta} + \cL^{\eps,\delta}_R )\widehat{P}^{\eps,\delta} = -\widehat{\cL}^{\eps,\delta} \widehat{P}^{\eps,\delta} - \cL^{\eps,\delta}_R \widehat{P}^{\eps,\delta}.
\end{align*}
Hence, mimicking the computations from \cite{fouque_stochastic_vol_new}, we can write
$$\widehat{\cL}^{\eps,\delta} \widehat{P}^{\eps,\delta} = \eps R_1^{\eps} + \sqrt{\eps\delta}R_2^{\eps} + \delta R_3^{\eps},$$
where $R_1^{\eps}$, $R_2^{\eps}$ and $R_3^{\eps}$ can be exactly computed as linear combinations of $\cL_k P_{i,j}$ and $\cM_l P_{i,j}$, for some $k,l,i,j$. However, the important fact about $R_i^{\eps}$ is that they are smooth functions of $t,x,y,z$, for small $\eps$ and $\delta$, bounded by smooth functions of $t,x,y,z$ independent of $\eps$ and $\delta$, uniformly bounded in $t,x,z$ and at most linearly growing in $y$. Additionally, from \eqref{eq:residual_PDE}, we can define $R_4^{\eps}$ and conclude that
$$R_4^{\eps} = \cL^{\eps,\delta}_R \widehat{P}^{\eps,\delta} = O(\eps+\delta),$$
since $\widehat{P}^{\eps,\delta}$ is bounded and smooth with all derivatives bounded. This follows from the same arguments presented in proof of Theorem 4.10 of \cite{fouque_stochastic_vol_new} with some minor differences to include $\cL_3$. Therefore, the residual $R^{\eps,\delta}$ solves the PDE
\begin{align}\label{eq:pde_R_eps_delta}
\left\{
\begin{array}{l}
 \cL^{\eps,\delta} R^{\eps,\delta} + \eps R_1^{\eps} + \sqrt{\eps\delta}R_2^{\eps} + \delta R_3^{\eps} + R_4^{\eps} = 0, \\ \\
 R^{\eps,\delta}(T_0,x,y,z,T) = -\eps(P_{2,0} + \sqrt{\eps}P_{3,0})(T_0,x,y,z,T) \\
 \hskip 2.8cm -\sqrt{\eps\delta}(P_{1,1} + \sqrt{\eps}P_{2,1})(T_0,x,y,z,T) ,
\end{array}
\right.
\end{align}
and then all the computations regarding the Feynman-Kac probabilistic representation of $R^{\eps,\delta}$ and the growth control of the source and final condition can be carried out just as in \cite{fouque_stochastic_vol_new} in order to conclude that $R^{\eps,\delta} = O(\eps+\delta)$. Lastly, the desired result follows because
$$|P^{\eps,\delta} - \widetilde{P}^{\eps,\delta}| \leq |R^{\eps,\delta}| + |\widehat{P}^{\eps,\delta} - \widetilde{P}^{\eps,\delta}|.$$ \qed



\section{Computing $P_{0,1}^{\delta}$ explicitly for European derivatives}\label{sec:p_1_delta_comp}

Let us write $P^{\delta}_{0,1}$ in terms of Greeks of $P_B$ by solving the PDE \eqref{eq:pde_p_1_delta}. Note first that
$$\frac{\partial P_0}{\partial z}(t,x,z,T) = \frac{\partial P_B}{\partial \sigma}(t,x, \bar{\sigma}_{t,T_0}(z,T))\frac{\partial \bar{\sigma}_{t,T_0}}{\partial z}(z,T).$$
Since $P_B$ satisfies the Black equation and the European derivative has maturity $T_0$, we have the relation between the Vega and the Gamma:
$$\frac{\partial P_B}{\partial \sigma}(t,x, \sigma) = (T_0-t) \sigma D_2P_B(t,x,\sigma).$$
Moreover,
$$\frac{\partial \bar{\sigma}_{t,T_0}}{\partial z}(z,T) = \bar{\eta}'(z) \lambda_{\sigma}(t,T_0,T,\kappa).$$
Combining the above equations, we obtain
\begin{align*}
&-f_0(t,T)\cA_0^{\delta}P_0 - f_1(t,T)\cA_1^{\delta}P_0 \\
&= -f_0(t,T)\cA_0^{\delta}P_0
-f_1(t,T)V_1^{\delta}(z) (T_0-t) \bar{\eta}(z) \bar{\eta}'(z) \lambda^2_{\sigma}(t,T_0,T,\kappa) D_1 D_2P_0 \\
&= -f_0(t,T)\cA_0^{\delta}P_0 - \tilde{f}_1(t,T)\cA_0^{\delta}D_1P_0,
\end{align*}
where
$$\tilde{f}_1(t,T) = f_1(t,T) (e^{-2\kappa (T-T_0)} - e^{-2\kappa (T-t)}).$$
Hence, we deduce the following PDE for $P_{0,1}^{\delta}$:
\begin{align}\label{eq:pde_p_1_delta_after}
\left\{
\begin{array}{l}
 \cL_B(\bar{\sigma}(t,z,T))P^{\delta}_{0,1} = -f_0(t,T)\cA_0^{\delta} P_0 - \tilde{f}_1(t,T)\cA_0^{\delta} D_1 P_0, \\ \\
 P^{\delta}_{0,1}(T_0,x,z,T) = 0.
\end{array}
\right.
\end{align}

By the same argument used to deduce Equation~\eqref{eq:P10_eps}, and using the linearity of the differential operators involved, we can easily conclude
$$P^{\delta}_{0,1}(t,x,z,T) = \left( \int_t^{T_0} f_0(u,T)du\right)\cA_0^{\delta} P_0 + \left(\int_t^{T_0} \tilde{f}_1(u,T)du\right)\cA_0^{\delta} D_1P_0.$$
Computing these integrals gives us the formulas
\begin{align}
&\lambda_0(t,T_0,T,\kappa) = \frac{1}{T_0-t}\int_t^{T_0} f_0(u,T) du \label{eq:lambda_0_comp}\\
&= \frac{e^{-\kappa(T-T_0)} - e^{-\kappa (T-t)}}{\kappa(T_0-t)} - \frac{e^{-3\kappa(T-T_0)} - e^{-3\kappa(T-t)}}{3\kappa(T_0-t)} \nonumber\\
&= \lambda(t,T_0,T,\kappa) -\lambda(t,T_0,T,3\kappa),\nonumber \\
&\lambda_1(t,T_0,T,\kappa) = \frac{1}{T_0-t} \int_t^{T_0} \tilde{f}_1(u,T) du \label{eq:lambda_1_comp}\\
&= e^{-2\kappa(T-T_0)} \frac{e^{-\kappa(T-T_0)} - e^{-\kappa(T-t)}}{\kappa(T_0-t)} - \frac{e^{-3\kappa(T-T_0)} - e^{-3\kappa(T-t)}}{3\kappa(T_0-t)} \nonumber\\
&= e^{-2\kappa(T-T_0)}\lambda(t,T_0,T,\kappa) -\lambda(t,T_0,T,3\kappa). \nonumber
\end{align}

\end{appendix}

\bibliography{my_bib_tex}
\bibliographystyle{plainnat}

\end{document}